\numberwithin{equation}{section}
\newtheorem{theorem}{Theorem}[section]
\newtheorem*{theorem*}{Theorem}
\newtheorem{proposition}[theorem]{Proposition}
\newtheorem{lemma}[theorem]{Lemma}
\newtheorem{corollary}[theorem]{Corollary}
\newtheorem{remark}[theorem]{Remark}
\newtheorem{definition}[theorem]{Definition}
\newcommand{\RR}{\mathbb{R}}
\newcommand{\CC}{\mathbb{C}}
\newcommand{\trace}{\mathrm{trace}}
\newcommand{\B}{\mathcal{B}}
\newcommand{\M}{\mathcal{M}}
\newcommand{\N}{\mathcal{N}}
\newcommand{\X}{\mathcal{X}}
\newcommand{\Y}{\mathcal{Y}}
\renewcommand{\L}{\mathcal{L}}
\newcommand{\A}{\mathcal{A}}
\newcommand{\R}{\mathbb{R}}
\newcommand{\w}{\mathbf{w}}
\DeclareMathOperator{\Id}{Id}
\DeclareMathOperator{\GL}{GL}
\DeclareMathOperator{\Span}{span}
\DeclareMathOperator{\im}{im}
\DeclareMathOperator{\SO}{SO}
\DeclareMathOperator{\OO}{O}
\DeclareMathOperator{\Proj}{Proj}
\DeclareMathOperator{\Gr}{Gr}
\DeclareMathOperator{\Hom}{Hom}
\newcommand{\rev}[1]{{\color{black} #1}}
\title{Phase Retrieval with Semi-algebraic and ReLU Neural Network Priors}
\begin{document}


\author{Tamir Bendory\thanks{School of Electrical and Computer Engineering, Tel Aviv University} \and Nadav Dym\thanks{Faculty of Mathematics, Technion-Israel Institute of Technology} \and Dan Edidin\thanks{Department of Mathematics, University of Missouri} \and Arun Suresh\thanks{Department of Mathematics, University of Missouri}}

\maketitle

\begin{abstract}
The key ingredient to retrieving a signal from its Fourier magnitudes---namely, to solve the phase retrieval problem---is an effective prior on the sought signal. 
In this paper, we study the phase retrieval problem under the prior that the signal lies in a semi-algebraic set. This is a very general prior as  semi-algebraic sets include linear models, sparse models, and 
 ReLU neural network generative models. The latter is the main motivation of this paper, due to the remarkable success of deep generative models in a variety of imaging tasks, including phase retrieval. We prove that almost all signals in $\R^N$ can be determined from their Fourier magnitudes, up to a sign, if they lie in a (generic) semi-algebraic set of dimension $N/2$. The same is true for all signals if the semi-algebraic set is of dimension $N/4$.
We also generalize these results to the problem of
signal recovery from the second moment in multi-reference alignment models with multiplicity-free representations of compact groups.
This general result is then used to derive improved sample complexity bounds
for recovering band-limited functions on the sphere from their noisy copies, each acted upon by a random element of $\SO(3)$.	
\end{abstract}



\section{Introduction}

We consider the \emph{phase retrieval problem} of recovering a signal $x \in \R^N$ from the magnitudes of its discrete Fourier transform, namely, from its power spectrum. 
Phase retrieval is the main computational problem in a wide variety of applications, including X-ray crystallography~\cite{harrison1993phase}, optical imaging~\cite{shechtman2015phase}, ptychography~\cite{yeh2015experimental}, and astronomy~\cite{fienup1987phase}.
Since the power spectrum determines the signal only up to the set of $N$ phases of the Fourier transform, the set of signals with the same power spectrum forms an orbit under a product
of $N$ circles.
To narrow down the set of possible solutions, two main strategies are used: considering a prior on the signal or acquiring additional measurements. 
An example of the latter is ptychography, in which the power spectra of overlapping patches are recorded~\cite{marchesini2016alternating,bendory2017non,iwen2020phase}. 
A prominent example of the former is the computational problem in X-ray crystallography, where the signal is assumed to be sparse~\cite{elser2018benchmark}.

Our inquiry in this paper is motivated by the remarkable success of using ReLU networks as generative models for phase retrieval applications, e.g.,~\cite{sinha2017lensless,rivenson2018phase,metzler2018prdeep,hand2018phase,deng2020learning},  as well as for related imaging tasks~\cite{ongie2020deep,zhong2021cryodrgn}. Motivated by this success, we would like to address signals $x\in \RR^N$ which arise from a neural network generative model
\begin{equation} \label{eq:relu}
x=A_\ell\circ \sigma_{\ell-1} \circ A_{\ell-1}\circ \ldots \circ \sigma_1 \circ A_1(z),
\end{equation}
where $z$ resides in a low-dimensional space, the $A_i$'s are linear transformations and the $\sigma_i$'s are semi-algebraic activation functions, such as the rectification function (ReLU) that sets the negative entries of an input to zero. We would like to know whether the phaseless Fourier measurements of this signal determine the signal up to a global sign. This is part of an ongoing effort to unveil the mathematical foundations of the phase retrieval problem; 
see for example recent results on the geometry of phase retrieval~\cite{beinert2015ambiguities,edidin2019geometry,barnett2022geometry}, and specifically on X-ray crystallography~\cite{bendory2020toward,bendory2023finite,edidin2023generic,ghosh2022sparse,ranieri2013phase}. 
For recent surveys on the mathematics of phase retrieval, see~\cite{bendory2017fourier,grohs2020phase,bendory2022algebraic}.

In this paper, we address the phase retrieval problem under the general
assumption that we know that our $N$-dimensional signal $x$ resides in
a semi-algebraic set $\M$ of dimension $M<N $. This prior is very
general and includes \rev{realistic priors, such as} the collection of all signals in $\RR^N$, which
can be obtained as images of ReLU networks as in~\eqref{eq:relu}, as
well as linear and sparse models. \rev{The latter is essential to the mathematical framework of X-ray crystallography, facilitating the reconstruction of molecular three-dimensional structures~\cite{elser2018benchmark,edidin2023generic}.}
Our main results
(Theorems~\ref{thm.main} and Theorem~\ref{thm.rotation}) state that if $M <
N/2$, then it is possible to solve the phase retrieval problem on the generic linear
translate or generic rotation of {\em any} semi-algebraic set. As a
corollary (Corollary~\ref{cor.deepnetwork}), we prove that it is
possible to solve the phase retrieval problem for signals that are in the image of a
very general class of ReLU neural networks. 
\rev{Corollary~\ref{cor:sparse} states the result for sparse priors.}
Moreover, the mathematical tools we develop to study the phase retrieval problem can also be applied to
more general multi-reference alignment problems, \rev{which are relevant for the single-particle cryo-electron microscopy technology used to elucidate the structure of biological molecules,} as we discuss in
Section~\ref{sec.MRAmultone}. 
    
\subsection{Mathematical preliminaries and notation}
The statements of our results require some basic terminology
from real algebraic geometry. A semi-algebraic set $\M \subseteq \RR^N$ is a finite union of sets, which are defined by polynomial equality and inequality constraints. Any semi-algebraic set~$\M$ can be written as a finite union of smooth manifolds, and the dimension of~$\M$ is defined to be the maximal dimension of these manifolds. We say that a property holds for \emph{generic} $x$ in~$\M$, if the property holds for all $x$ outside a semi-algebraic subset $\N$ of $\M$ of strictly lower dimension. 
If $\M$ is algebraic, this is equivalent to saying that there is a non-empty Zariski dense open set $U \subset \M$ such that the property holds for $x \in U$. Additional background on algebraic geometry can be found in Appendix~\ref{app.algebraic} and books such as \cite{basu2006algorithms,bochnak1998real}.

Throughout this paper, we say that two signals $x,y\in \RR^N$ are
  {\em equivalent} if $x = \pm y$ and we denote this by writing
  $x\sim y$.

\subsection{Main results} \label{sec:main_results}
\rev{The set of signals with the same power spectrum forms an orbit under the group $\{\pm 1\} \times (S^1)^{(N-1)/2}$ ($N$ odd)
or $\{\pm 1\}^2 \times (S^1)^{(N-2)/2}$ ($N$ even)~\cite[Section 5]{bendory2020toward} so in order to recover a signal from its power spectrum we must impose a prior
condition on the signal.}
Our first main result states that given any semi-algebraic set $\M\subset \R^N$ 
of sufficiently small dimension, the power spectrum
separates signals that lie in the translated set $A \cdot \M =\{Ax|\, x \in \M\}$, when  $A$ is a generic linear transformation on $\R^N$.

\begin{theorem} \label{thm.main}
  Let $\mathcal{M}$ be a semi-algebraic subset of $\RR^N$ of dimension $M$. 
 Then, for a generic invertible matrix $A \in \GL(N)$, the following hold:
\begin{enumerate}
\item If $N\geq 2M$,  
  then a generic vector $x \in A\cdot \M$ is determined, up to a sign, by its power spectrum.
    \item If $N \geq 4M$, then every vector $x \in A \cdot \M$ is determined, up to a sign, by its power spectrum.  
\end{enumerate}
\end{theorem}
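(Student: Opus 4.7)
The plan is to encode the fibers of the power spectrum as orbits of a compact Lie group and then carry out a dimension count on a suitable incidence variety parameterized by $\GL(N)$.

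\textbf{Setup.} Let $G\subset\GL(N,\R)$ be the compact abelian group acting on $\R^N$ by multiplying Fourier coefficients by unimodular phases compatible with the reality constraint. By the discussion preceding the theorem statement, $G\cong\{\pm 1\}\times (S^1)^{(N-1)/2}$ when $N$ is odd and $\{\pm 1\}^2\times(S^1)^{(N-2)/2}$ when $N$ is even, so $\dim G=\lfloor(N-1)/2\rfloor$ and the fibers of the power spectrum map are precisely the $G$-orbits. Since $-I\in G$, the theorem is equivalent to the assertion that for a generic $A\in\GL(N)$ there is no collision of the form $Ay=gAx$ with $x,y\in\M$ and $g\in G\setminus\{\pm I\}$: for a generic $x\in\M$ in part (1), and for every $x\in\M$ in part (2).

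\textbf{Incidence variety and dimension count.} I would form
\begin{equation*}
\mathcal{V}=\bigl\{(A,g,x)\in\GL(N)\times(G\setminus\{\pm I\})\times\M\;:\;A^{-1}gAx\in\M\bigr\}.
\end{equation*}
The ambient space has dimension $N^2+\dim G+M$, and the condition $A^{-1}gAx\in\M$ is a semi-algebraic constraint of generic codimension $N-M$, giving $\dim\mathcal{V}\leq N^2+\dim G+2M-N$. Now project to $\GL(N)$. For a generic $A$, the fiber $\mathcal{V}_A$ has dimension at most $\dim G+2M-N$, and its image in $\M$ under $(g,x)\mapsto x$ is a semi-algebraic subset of the same dimension bound. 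Evaluating with $\dim G=\lfloor(N-1)/2\rfloor$, one checks that
\begin{align*}
\dim G+2M-N&<M\qquad\text{whenever }N\geq 2M,\\
\dim G+2M-N&<0\qquad\text{whenever }N\geq 4M.
\end{align*}
The first inequality shows that the set of colliding $x$ is a proper semi-algebraic subset of $\M$ for generic $A$, proving (1). The second forces $\dim\mathcal{V}<N^2$, so the projection $\mathcal{V}\to\GL(N)$ is not dominant; hence a generic $A$ lies outside its image, $\mathcal{V}_A=\emptyset$, and every $x\in A\cdot\M$ is determined by its power spectrum up to a sign, giving (2).

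\textbf{Main obstacle.} The dimension bound on $\mathcal V$ relies on the condition $A^{-1}gAx\in\M$ actually imposing codimension $N-M$, equivalently on the evaluation map $\Phi:(A,g,x)\mapsto A^{-1}gAx$ being dominant onto $\R^N$. The differential of $\Phi$ in $A$ at $A=I$ is $\delta A\mapsto g\,\delta A\, x-\delta A\, g\, x$, whose image equals $\R^N$ as soon as $x$ and $gx$ are linearly independent; this linear independence fails only on a lower-dimensional stratum of $G\setminus\{\pm I\}$ (where $x$ is an eigenvector of $g$), and a routine stratification argument confirms that the contribution of the exceptional locus to $\dim\mathcal V$ is strictly smaller than the generic bound. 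Once submersivity at generic points is established, the stated bounds follow from the semi-algebraic fiber dimension theorem. The harder subtlety is to verify that the argument is not destroyed at the boundary cases $N=2M$ and $N=4M$; this is precisely where the improved estimate $\dim G=\lfloor(N-1)/2\rfloor<N/2$ (rather than $N/2$) is used, giving the strict inequalities above and matching the hypotheses of the theorem exactly.
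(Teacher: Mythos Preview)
Your strategy is in the same spirit as the paper's---an incidence variety followed by a dimension count---but you parameterize by $(A,g,x)$ with $g\in G\setminus\{\pm I\}$ rather than by $(A,x,y)$ with $x\not\sim y$, and this reparametrization introduces a genuine gap.

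The problem is that your $\mathcal V$ does \emph{not} exclude the spurious collisions where $A^{-1}gAx=\pm x$. If $0\in\M$, then $\GL(N)\times(G\setminus\{\pm I\})\times\{0\}\subset\mathcal V$, so $\dim\mathcal V\geq N^2+\dim G$, which already violates your claimed bound $N^2+\dim G+2M-N$ whenever $N>2M$ (in particular throughout the range $N\geq 4M$ of part~(2)). More generally, the locus $\{(A,g,x):gAx=\pm Ax\}$ sits inside $\mathcal V$ whenever $\pm x\in\M$, and its dimension is not controlled by your ``codimension $N-M$'' heuristic: the map $\Phi(A,g,x)=A^{-1}gAx$ is far from submersive there (indeed, at $x=0$ it is constant). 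So the heuristic ``the constraint $\Phi\in\M$ has codimension $N-M$'' is false as stated, and your projection to $\GL(N)$ cannot conclude $\mathcal V_A=\emptyset$ for generic $A$. The ``routine stratification argument'' you invoke for the exceptional locus would, once unwound, amount to bounding $\dim\{(A,g):gAx=Ay\}$ uniformly over all $x\not\sim y$, which is exactly the content of the paper's key lemma.

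The paper avoids this entirely by working with the incidence set $\B=\{(x,y,A):x\not\sim y,\ P(x;A)=P(y;A)\}$ and projecting first to $(x,y)$. The crucial observation is that for fixed $x\not\sim y$ the constraint $P(x;A)=P(y;A)$ \emph{factors as a Cartesian product} over blocks of rows of $A$: one condition $\langle w,x\rangle^2=\langle w,y\rangle^2$ for each one-dimensional Fourier block and one condition $\langle w,x\rangle^2+\langle w',x\rangle^2=\langle w,y\rangle^2+\langle w',y\rangle^2$ for each two-dimensional block. Each of these is a nontrivial polynomial on $\R^N$ (resp.\ $\R^{2N}$) precisely because $x\not\sim y$, so each drops the dimension by one, giving $\dim\{A:P(x;A)=P(y;A)\}\leq N^2-K(N)$ with $K(N)=\lfloor N/2\rfloor+1$, uniformly in $(x,y)$. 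No submersivity analysis, no group $G$, no stratification is needed. Your differential computation, by contrast, only establishes submersivity at generic points and leaves the hard uniform bound unproved.
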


\rev{The power spectrum of a vector in $\R^N$ consists of $\sim N/2$ measurements and Theorem~\ref{thm.main}
implies that the power spectrum is injective on a `generic' semi-algebraic set of dimension at most $N/4$, which is half the number of measurements.
Essentially, the same tradeoff between the number of measurements and dimension occurs in frame phase retrieval~\cite{balan2006signal}.
Given an $N$ element frame in $\R^N$, the real frame phase retrieval problem is the problem of recovering a signal in $\R^M$ from $N$ phaseless frame measurements. This is equivalent to recovering all points (up to a global sign) in an $M$-dimensional linear subspace of $\R^N$ (the range of the analysis operator of the frame) from the absolute values of its coordinates. 
The result of~\cite{balan2006signal} states that a generic frame of size $N \geq 2M -1$ does phase retrieval.  Phrased in the language of this paper, this means that if $\M$ is a generic linear subspace of $\R^N$ of dimension $M$ with 
$M \leq (N+1)/2$, then the 
map $\R^N \to \R^N_{\geq 0}, (x_1, \ldots , x_N) \mapsto (|x_1|, \ldots , |x_N|)$ is injective restricted to $\M$.}

Our second result shows that if $\M$ is a semi-algebraic set of
sufficiently small dimension, then a generic rotation
of $\M$ separates signals, up to a sign, from their power spectra. 
This result is arguably more natural, since if $A \in \SO(N)$ is
a rotation matrix, then $A \cdot \M$ is isometrically equivalent to $\M$.

\begin{theorem} \label{thm.rotation}
  Let $\mathcal{M}$ be a semi-algebraic subset of $\RR^N$ of dimension $M$. Then,  for a generic rotation $A \in \SO(N)$, the following hold: 
\begin{enumerate}
\item If $N\geq 2M+2$,  
  then a generic vector $x \in A\cdot \M$  is determined, up to a sign, by its power spectrum.
\item If  $N\geq 4M+2$, then every vector $x \in A \cdot \M$ is determined,  up to a sign, by its power spectrum.
\end{enumerate}
\end{theorem}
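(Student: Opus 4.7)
The plan is to mimic the incidence-variety/dimension-count strategy of Theorem~\ref{thm.main}, replacing $\GL(N)$ by $\SO(N)$ and exploiting Parseval's identity as an extra automatic relation available under rotations. Let $\mu:\R^N \to \R^N$ denote the power-spectrum map, $\mu(x)_k = |\hat x_k|^2$, and define the bad incidence variety
\[
W \;=\; \{(A,x,y) \in \SO(N) \times \M \times \M : x\not\sim y,\ \mu(Ax)=\mu(Ay)\}.
\]
Part~2 is equivalent to the statement that the image of the projection $\pi:W\to\SO(N)$ is a proper semi-algebraic subset of $\SO(N)$. Part~1 follows if, for generic $A\in\SO(N)$, the slice $\pi^{-1}(A)$ projects to a subset of $\M$ of dimension strictly smaller than $M$. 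Both assertions are consequences of an upper bound on $\dim W$.

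By Parseval, $\sum_k \mu_k(x) = N\|x\|^2$, so $\mu(Ax)=\mu(Ay)$ forces $\|Ax\|=\|Ay\|$, and since $A$ is a rotation this yields $\|x\|=\|y\|$. Thus $W$ projects into the base
\[
B \;:=\; \{(x,y)\in\M\times\M : \|x\|=\|y\|\},
\]
whose dimension is at most $2M$, with equality exactly when $\M$ lies on a sphere centered at the origin---this worst case is the source of the ``$+2$'' relative to Theorem~\ref{thm.main}. The constraint $x\not\sim y$ together with $\|x\|=\|y\|$ forces $x,y$ to be linearly independent (a linearly dependent pair $y=cx$ in $\R^N$ with $\|y\|=\|x\|$ has $c=\pm 1$), so the $\SO(N)$-orbit of $(x,y)$ in $\R^N\times\R^N$ has stabilizer $\SO(N-2)$ (the rotations fixing $\Span(x,y)$ pointwise) and dimension $2N-3$. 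Since $\mu$ has $\lfloor N/2\rfloor+1$ independent components, the equation $\mu(u)=\mu(v)$ imposes $\lfloor N/2\rfloor+1$ polynomial conditions on $(u,v)$, one of which is absorbed by Parseval on the orbit (where $\|u\|=\|v\|$), leaving $\lfloor N/2\rfloor$ effective conditions; a transversality argument shows that for generic $(x,y)$ these cut out the expected codimension, so the fiber $\{A \in \SO(N) : \mu(Ax) = \mu(Ay)\}$ has codimension $\lfloor N/2\rfloor$ in $\SO(N)$.

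Combining the above, $\dim W \;\leq\; 2M + \dim\SO(N) - \lfloor N/2\rfloor$. Part~2 requires $\dim W < \dim\SO(N)$, which becomes $\lfloor N/2\rfloor > 2M$ and, accounting for both parities of $N$, is equivalent to $N\geq 4M+2$. Part~1 requires $\dim W - \dim\SO(N) < M$, giving $\lfloor N/2\rfloor > M$ and hence $N \geq 2M+2$. The main technical obstacle is the transversality statement used above: one has to verify that the $\lfloor N/2\rfloor$ Parseval-reduced equations really do cut out the expected codimension on the generic $\SO(N)$-orbit, rather than being redundant due to some hidden symmetry. This is done by stratifying the semi-algebraic set $\M$ into smooth pieces, checking independence of the differentials of these equations on the top-dimensional stratum, and confirming that lower-dimensional strata contribute only lower-dimensional pieces to $W$ that do not affect the leading dimension bound.
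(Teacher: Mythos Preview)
Your overall architecture matches the paper: both reduce to an incidence set $W$ and bound $\dim W$ via the fiber dimension of $\SO(x,y)=\{A\in\SO(N):\mu(Ax)=\mu(Ay)\}$ (the paper's Lemma~\ref{lem.AA} and Proposition~\ref{prop.crux}). Your arithmetic in the last paragraph is also correct. The problem is the central technical step.

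You assert that ``a transversality argument shows that for generic $(x,y)$ these cut out the expected codimension,'' but the dimension bound $\dim W \leq 2M + \dim\SO(N) - \lfloor N/2\rfloor$ requires a \emph{uniform} bound on $\dim\SO(x,y)$ over \emph{all} $x\not\sim y$, not just generic pairs. If there were a positive-dimensional family of special $(x,y)$ on which the $\lfloor N/2\rfloor$ equations degenerate, that family could produce a piece of $W$ large enough to surject onto $\SO(N)$. Your closing paragraph about stratifying $\M$ does not address this: the relevant degeneracy is in the \emph{fiber} over $(x,y)$, not in the smoothness of $\M$. Moreover, even for a fixed generic $(x,y)$ you have not actually proved transversality---``checking independence of the differentials'' is the statement of what needs to be done, not a proof, and on the curved manifold $\SO(N)$ this is not automatic.

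This is exactly where the paper's real work lies. The paper proves the uniform bound (Proposition~\ref{prop.crux}) by realizing $\SO(N)$ as a tower of sphere bundles $F_N\to F_{N-1}\to\cdots\to F_1$ and inductively bounding $\dim F_{S_k}(x,y)$. The key subtlety is that over the locus $L_{S_{k-1}}(x,y)$ where $x\pm y$ already lies in $\Span(\w)$, the next equation $p_{N_k}(x,\tilde\w)=p_{N_k}(y,\tilde\w)$ is \emph{automatically} satisfied and imposes no condition, so the fiber there is full-dimensional. The induction therefore tracks \emph{two} quantities simultaneously: the dimension of $F_{S_k}(x,y)$ and the dimension of the degenerate locus $L_{S_k}(x,y)$, the latter controlled by a Schubert-type codimension count. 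Without this (or an equivalent mechanism) your transversality claim is a gap, not a proof.

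A minor point: your Parseval observation is a pleasant heuristic for the ``$+2$'' but plays no role in your actual count---you use $\dim B\leq 2M$, not $2M-1$, so the Parseval relation is only used to reduce $\lfloor N/2\rfloor+1$ equations to $\lfloor N/2\rfloor$ expected conditions, which is precisely the codimension the paper proves directly without invoking Parseval.
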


We note that Theorem~\ref{thm.main} and Theorem~\ref{thm.rotation} are independent results and require separate proofs. However, the proof
  of Theorem \ref{thm.rotation} is significantly more intricate than
  the proof of Theorem \ref{thm.main}.

\subsection{Applications}

As mentioned earlier, 
there are two natural situations where we can apply our theoretical results. We first consider signals generated by neural networks.

The following corollary states that a signal can be recovered from its power spectrum, up to a sign, if it was generated by a neural network whose last layer's width is two (respectively four) times larger than the dimension of the network's image. 
The activations can be taken to be ReLU functions or other semi-algebraic activation functions, such as leaky ReLU, PReLU, or Hardtanh~\cite{pytorch}. The corollary is proven in Section~\ref{sub.NN}.

\begin{corollary}[Deep neural networks] \label{cor.deepnetwork}
  Let $\Phi: \RR^K \to \RR^{N_{\ell-1}}$ be a neural network of the form \begin{equation*}
\Phi =\sigma_{\ell-1}\circ A_{\ell-1}\circ \ldots \circ \sigma_1 \circ A_1,\end{equation*}
 where the activation functions $\sigma_i:\RR \to \RR$ are semi-algebraic functions. Denote the dimension of the image  $\Phi(\RR^K)$ by $M$ and let $A_\ell$ be a  generic $N \times N_{\ell-1} $ matrix.  Then:
\begin{enumerate}
\item if 
$N \geq 2M$, then a generic vector in $\M = \im A_\ell \circ  \Phi$ can be recovered, up to a sign,  from its power spectrum.
\item if 
$N \geq 4M$, then any vector in $\M = \im A_\ell \circ  \Phi$ can be recovered, up to a sign,  from its power spectrum.
\end{enumerate}
\end{corollary}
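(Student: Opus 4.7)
The plan is to reduce Corollary~\ref{cor.deepnetwork} to Theorem~\ref{thm.main} by absorbing the generic last-layer matrix $A_\ell$ into a generic element of $\GL(N)$. I would first check that $V := \Phi(\RR^K) \subset \RR^{N_{\ell-1}}$ is a semi-algebraic set of dimension $M$: each $\sigma_i$ is semi-algebraic by hypothesis and each $A_i$ is linear, so $\Phi$ is a semi-algebraic map, and by Tarski--Seidenberg its image is semi-algebraic.

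In the regime $N \ge N_{\ell-1}$ I would embed $V$ as $\M_0 := \iota(V) \subset \RR^N$, where $\iota(v) = (v,0)$ is the standard inclusion; this is a semi-algebraic set of dimension $M$. Any $A_\ell \in \RR^{N \times N_{\ell-1}}$ can be completed to $A \in \RR^{N \times N}$ by choosing its last $N - N_{\ell-1}$ columns arbitrarily, and then $A \cdot \M_0 = A_\ell \cdot V$. Theorem~\ref{thm.main} applied to $\M_0$ yields a Zariski open dense good set $\Omega \subset \GL(N)$. Since $A \cdot \M_0$ depends only on the first $N_{\ell-1}$ columns of $A$, the bad set $B := \GL(N) \setminus \Omega$ is saturated under the projection $\pi: \RR^{N \times N} \to \RR^{N \times N_{\ell-1}}$ onto these columns, so the standard fiber-dimension formula gives
\[
\dim B \;=\; \dim \pi(B) + N(N-N_{\ell-1}).
\]
Combined with $\dim B < N^2$ this forces $\dim \pi(B) < N N_{\ell-1}$, so generic $A_\ell$ is good. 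When $N < N_{\ell-1}$ the inclusion $\iota$ is no longer available, but I would use instead the parameterization $A_\ell = A(I_N\mid A')$ with $A\in \GL(N)$ and $A'\in \RR^{N\times(N_{\ell-1}-N)}$, which is an algebraic isomorphism on the Zariski open subset where the first $N$ columns of $A_\ell$ are invertible. For each $A'$ one applies Theorem~\ref{thm.main} to $\M_{A'} := (I_N\mid A')V$ and runs a parameterized Fubini-type dimension argument on $\{(A',A) : A\in B_{A'}\}$ to obtain that this total bad set has dimension strictly less than $N N_{\ell-1}$, giving the conclusion for generic $A_\ell$.

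The main technical step I would be most careful about is the saturation/fibration dimension count and its parameterized version. It requires the bad set $B_{A'}$ to vary semi-algebraically with $A'$, which holds because failure of phase retrieval is a semi-algebraic condition in the combined parameters. It also uses the fact that $\dim \M_{A'} \le M$ suffices for Theorem~\ref{thm.main} to give the desired threshold $N \ge 2M$ (respectively $N \ge 4M$), since a lower-dimensional $\M_{A'}$ only weakens the hypothesis. The rest is bookkeeping on top of Theorem~\ref{thm.main}, and an identical argument with the generic set from part~2 of the theorem handles the ``all vectors'' statement.
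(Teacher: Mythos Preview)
Your proposal is correct and follows essentially the same strategy as the paper: both split into the cases $N \ge N_{\ell-1}$ and $N < N_{\ell-1}$, factor $A_\ell$ through left multiplication by $\GL(N)$ on a fixed semi-algebraic image of $\Phi$, apply Theorem~\ref{thm.main}, and transfer genericity back to $\R^{N\times N_{\ell-1}}$ via a fiber-dimension count. The packaging differs only cosmetically---the paper fixes a full-rank reference $A_\ell^0$ and writes $A_\ell = AA_\ell^0$ in the first case and uses the Grassmannian $\Gr(N,N_{\ell-1})$ as the base of the fibration in the second, whereas you use the zero-padding embedding $\iota$ and the affine chart $A_\ell = A(I_N\mid A')$---but these are the same constructions in different coordinates, and your explicit remark that the family $B_{A'}$ must vary semi-algebraically is a point the paper leaves implicit.
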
  
Note that $M$ will always be smaller than or equal to the minimal dimension of all layers in $\Phi$. Thus, the corollary implies that phase retrieval has a unique solution for a generic neural generative model as in~\eqref{eq:relu} whose last layer is at least two (respectively, four) times larger than the minimal layer.
\rev{Importantly, the requirement that $A_\ell$ be a generic matrix is not a significant limitation. In practice, this condition can be satisfied by introducing a small perturbation to the final layer of a trained neural network.}

Motivated by X-ray crystallography, we also consider the prior that
the signal is sparse. In this case, $\M$ is the union of the
$\binom{N}{M}$ linear subspaces spanned by the vectors $\{e_i\}_{i \in
  S}$, where $S$ runs through all subsets of $[1,N]$ of size
$M$. If $A \in \GL(N)$ is generic, then $A\cdot \M$ is the set of vectors that are sparse
with respect to a generic basis for $\RR^N$. 
In this case, our results 
give slightly weaker bounds  
than those obtained in~\cite{edidin2023generic} using projective
algebraic geometry.
However, if~$A$ is a rotation matrix, then
$A\cdot \M$ is the set of vectors which are $M$-sparse when expanded with respect to
the orthonormal basis given by the rows of $A$, $\{Ae_i\}_{i=1, \ldots N}$; this case has not been studied before. Thus, applying Theorem \ref{thm.rotation}, we obtain the following new result.
\rev{
\begin{corollary}[Sparsity]\label{cor:sparse}
Let $\mathcal{B}$ be a generic orthonormal basis and let $\M$ be the set of signals which are $M$-sparse when expanded in the basis
$\mathcal{B}$.
  \begin{enumerate}
    \item If $N \geq  2M+2$, then a generic signal in $\M$ can be recovered, up to a sign, from its power spectrum.
      \item If $N \geq 4M+2$, then every
       signal in $\M$ can be recovered, up to a sign, from its power spectrum.
  \end{enumerate}
\end{corollary}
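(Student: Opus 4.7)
The plan is to apply Theorem~\ref{thm.rotation} directly to the semi-algebraic set of vectors that are $M$-sparse with respect to the standard basis. Let
\[
\M_0 \;=\; \bigcup_{\substack{S \subset \{1,\ldots,N\} \\ |S|=M}} \Span\{e_i : i \in S\} \;\subset\; \R^N .
\]
As a finite union of $M$-dimensional linear subspaces, $\M_0$ is semi-algebraic of dimension exactly $M$, so it satisfies the hypotheses of Theorem~\ref{thm.rotation}.

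Next I would observe that for any $A \in \SO(N)$, the set $A \cdot \M_0$ is precisely the set of vectors that are $M$-sparse in the orthonormal basis $\mathcal{B}_A = \{Ae_1,\ldots,Ae_N\}$: if $y \in \M_0$ is supported on a size-$M$ subset $S$, then $Ay = \sum_{i \in S} y_i \, Ae_i$ is $M$-sparse in $\mathcal{B}_A$, and conversely any $x$ that is $M$-sparse in $\mathcal{B}_A$ satisfies $A^{-1}x \in \M_0$. The map $\SO(N) \to \{\text{positively oriented orthonormal bases of }\R^N\}$, $A \mapsto \mathcal{B}_A$, is an algebraic isomorphism, so a generic orthonormal basis $\mathcal{B}$ corresponds to a generic $A \in \SO(N)$. (Reordering basis vectors or flipping a sign, if needed, reduces the general orthonormal basis to the positively oriented case without affecting the power spectrum up to sign, since the power spectrum depends only on the vector $x \in \R^N$, not on the basis used to describe $\M$.)

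With these two identifications in place, the two statements of the corollary become immediate consequences of the corresponding statements of Theorem~\ref{thm.rotation} applied with $\dim \M_0 = M$: the bound $N \geq 2M+2$ yields recovery of a generic signal in $A \cdot \M_0 = \M$, while $N \geq 4M+2$ yields recovery of every signal in $\M$. I do not expect a genuine obstacle; the work is entirely in setting up the dictionary between sparsity in a rotated basis and the action of $\SO(N)$ on the standard sparse model, and the only point requiring a line or two of care is verifying that the notion of ``generic orthonormal basis'' in the corollary matches the notion of ``generic rotation'' used in Theorem~\ref{thm.rotation}.
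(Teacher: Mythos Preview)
Your proposal is correct and takes essentially the same approach as the paper: the paper simply notes (in the paragraph preceding the corollary) that the set of $M$-sparse vectors in the standard basis is a union of $\binom{N}{M}$ linear subspaces of dimension $M$, that for $A\in\SO(N)$ the set $A\cdot\M_0$ consists of the $M$-sparse vectors in the orthonormal basis $\{Ae_i\}$, and then applies Theorem~\ref{thm.rotation}. Your extra remark reconciling ``generic orthonormal basis'' with ``generic $A\in\SO(N)$'' is a reasonable bit of hygiene that the paper leaves implicit.
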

}

\rev{The problem of recovering signals which are sparse with respect to the standard orthonormal basis has a long history in the literature, going back to the work of Patterson on binary signals~\cite{patterson1944ambiguities}. More recently, the papers~\cite{ranieri2013phase,jaganathan2012recovery, elser2018benchmark, bendory2023finite} have studied phase retrieval with sparse signals where the non-zero entries can vary in a finite alphabet or continuously.}
In~\cite{bendory2020toward}, it is conjectured (and verified for a limited number of cases)  that for the standard orthonormal
basis, a generic sparse vector whose support set satisfies
$|S -S| > |S|$ can be recovered from its power spectrum up to a dihedral (circular shifts
and reflection) equivalence. A necessary, but not sufficient,
condition for this to happen is  $|S| < N/2$. Moreover, the experiments in \cite{bendory2020toward} indicate that, empirically, 
a subset $S \subset [1,N]$ with $|S| < N/2$ 
is  likely to satisfy
$|S -S| > |S|$ but for any $N$ there are always subsets $S$ with $|S| < N/2$
that do not satisfy this condition. 
\rev{Notably, Corollary~\ref{cor:sparse} does not apply in this case, as the genericity condition does not necessarily encompass the standard basis. Yet,} if we set $M = |S|$, then the bounds obtained here for generic orthonormal bases are comparable
to the conjectural bounds of \cite{bendory2020toward}.
Moreover,  by applying a generic rotation, we eliminate the dihedral ambiguity
of \cite{bendory2020toward}.

\subsection{Organization of the paper} 
Section \ref{sec:common_strategy}
  describes the overall strategy and introduces the techniques
  necessary for proving the main results of the paper.
  Theorem \ref{thm.main} is proved in Section~\ref{sec:proof_thm_main} and
  Theorem \ref{thm.rotation} is proved in Section~\ref{sec:proof_thm_rotation}.
  In Section~\ref{sec.MRAmultone} we show how the methods used
  to prove Theorems~\ref{thm.main} and \ref{thm.rotation} can be used
  to extend our results from phase retrieval to the multi-reference
  alignment problem of recovering a signal in a multiplicity-free
  representation of a compact group from its second moment.
Finally, in Section~\ref{sec:future_work} we outline some future research directions.

\section{Common strategy for both proofs} \label{sec:common_strategy}
\subsection{The power spectrum and the second moment} \label{sec.2ndmoment}

In this paper, we are interested in the problem of recovering a
signal in $\R^N$ from its power spectrum or, equivalently,  from its
periodic auto-correlation.  To do this, we will follow a strategy
developed in ~\cite{edidin2023generic} and use the fact that the power
spectrum is equivalent to the second moment for the action of the dihedral or the
cyclic group on $\RR^N$ (the second moments are the same~\cite{bendory2022sample}).

Following ~\cite{edidin2023generic}, if we expand a signal  $x
\in \RR^N$ with respect to a real Fourier basis $v_1,
\ldots , v_{N}$, and we write $x = \sum_{n=1}^{N} x[n] v_n$,  
then the second moment is given by the formula\footnote{Note that the indexing we use here differs from that used in \cite{edidin2023generic}.
There, the indices go from $0$ to $N-1$ and the components are given as $x[n]^2 + x[N-n]^2$.
The latter indexing is more natural from a Fourier theory perspective. Our indexing was chosen
to simplify the notation in the proofs of our results.}:

\begin{equation}
  b_x = (x[1]^2, x[2]^2, x[3]^2 + x[4]^2, \ldots , x[N-1]^2 + x[N]^2),
\end{equation}
if $N$ is even, and
\begin{equation}
 b_x = (x[1]^2, x[2]^2 + x[3]^2, \ldots , x[N-1]^2 + x[N]^2),   
\end{equation}
if $N$ is odd.
Now, consider a linear map $\RR^N \stackrel{A} \to \RR^N$, given by
\begin{equation}
 x \mapsto \left( \langle x,w_1 \rangle, \ldots , \langle x, w_{N} \rangle\right),
\end{equation}
where $x$ is represented by a column vector and $w_1, \ldots, w_{N}$ are the rows
of the matrix representing $A$ with respect to our chosen basis for $\RR^N$. In particular, if $x \in \R^N$ then, using the notation of \cite{dym2022low},
\begin{equation*}
b_{Ax} = P(x;A),\end{equation*}

where
\begin{equation*}
  P(x;A) = \left( \langle x, w_1 \rangle^2, \langle x,w_2 \rangle^2,
  \langle x,w_3 \rangle^2 + \langle x, w_4 \rangle^2, \ldots, \langle x,w_{N-1}
  \rangle^2 + \langle x, w_N \rangle^2 \right),
\end{equation*}  if $N$ is even
and
\begin{equation*}
  P(x;A) = \left( \langle x, w_1 \rangle^2, 
  \langle x,w_2 \rangle^2 + \langle x, w_3 \rangle^2, \ldots, \langle x,w_{N-1}
  \rangle^2 + \langle x, w_N \rangle^2 \right),
\end{equation*}
if $N$ is odd.
This implies that if $A$ is an arbitrary $N \times N$ matrix, 
then (with respect to the real Fourier basis) the 
phaseless Fourier measurements of $Ax$ are {\em separable}~\cite{dym2022low}
with respect to the rows of $A$.

\subsection{Reduction to fiber dimension counting}
Our main results, Theorem~\ref{thm.main} and Theorem~\ref{thm.rotation}, are similar in structure, but differ in that the first considers matrices $A$ coming from $\GL(N)$, while the latter considers matrices from $\SO(N)$. As a preliminary step for both proofs, we prove a lemma that holds for general semi-algebraic subsets $\A \subseteq \RR^{N\times N} $. This lemma will reduce both theorems
to the task of uniformly bounding  the dimension of the sets
\begin{equation*}
\A(x,y)=\{A\in \A\,|\, P(x;A)=P(y;A) \},
\end{equation*}
over all $x,y \in \M$ which are not equivalent.
\begin{lemma} \label{lem.AA}
  Let $\mathcal{M}$ be a semi-algebraic subset of $\RR^N$ of dimension $M$, and let $\A \subseteq \RR^{N\times N}$ be a semi-algebraic set such that
\begin{equation*}
\dim(\A(x,y))\leq \dim(\A)-K, \quad  \forall x,y\in \M, \text{ with } x\not \sim y.\end{equation*}

  Then, for a generic invertible matrix $A \in \A$, the following hold:
\begin{enumerate}
\item If $K > M$,
  then a generic vector $x \in A\cdot \M$  is determined, up to a sign, from its power spectrum.
\item If $K > 2M$, then every vector $x \in A \cdot \M$ is determined, up to
 a sign, from its power spectrum.
\end{enumerate}
\end{lemma}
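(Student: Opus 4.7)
The plan is to introduce the ``bad locus''
$$B = \{(x, y, A) \in \M \times \M \times \A : x \not\sim y, \; P(x;A) = P(y;A)\},$$
and control its dimension via two projections. Projecting first onto $\M \times \M$, the fiber over a pair $(x,y)$ with $x \not\sim y$ is exactly $\A(x,y)$, which by hypothesis has dimension at most $\dim(\A) - K$. The standard semi-algebraic bound on the dimension of a total space in terms of the base and the maximal fiber dimension (see Appendix~\ref{app.algebraic} and \cite{bochnak1998real}) then yields
$$\dim(B) \leq 2M + \dim(\A) - K.$$

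Next I would project $B$ onto $\A$ via $\pi(x,y,A) = A$. For part (2), when $K > 2M$, the displayed bound forces $\dim(\pi(B)) \leq \dim(B) < \dim(\A)$, so a generic $A \in \A$ lies outside $\pi(B)$. For such $A$, no pair $(x,y) \in \M \times \M$ with $x \not\sim y$ satisfies $P(x;A) = P(y;A)$; invertibility of $A$ gives $x \sim y$ if and only if $Ax \sim Ay$, and since $b_{Ax} = P(x;A)$ from Section~\ref{sec.2ndmoment}, every $x' \in A \cdot \M$ is determined up to sign by its power spectrum. For part (1), when $K > M$: if $\dim(\pi(B)) < \dim(\A)$, the argument of part (2) applies verbatim. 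Otherwise the generic fiber of $\pi$ has dimension at most $\dim(B) - \dim(\A) \leq 2M - K < M$; its projection onto the first $\M$-factor --- which is precisely the set of ``bad'' $x \in \M$ admitting a non-equivalent partner $y \in \M$ with $P(x;A) = P(y;A)$ --- therefore has dimension strictly less than $M$, so a generic $x \in \M$, and correspondingly a generic $x' = Ax \in A \cdot \M$, is determined up to sign by its power spectrum.

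The main technical subtlety I anticipate is applying the dimension-theoretic inputs --- the inequality $\dim(X) \leq \dim(Y) + \max_y \dim(f^{-1}(y))$ and the generic-fiber equality for dominant semi-algebraic maps --- correctly in the real semi-algebraic category, where the usual irreducibility arguments and Chevalley's theorem must be replaced by semi-algebraic stratification. A minor routine point is that if $\A$ contains any invertible matrix then $\A \cap \GL(N)$ is Zariski open and dense in $\A$, so restricting to invertible matrices preserves genericity and justifies the phrasing ``a generic invertible matrix $A \in \A$''.
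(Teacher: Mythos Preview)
Your proposal is correct and follows essentially the same approach as the paper: the same incidence set $B$, the same two projections, the same dimension bound $\dim B \leq 2M + \dim(\A) - K$, and the same case split in part~(1). The paper handles the technical subtlety you flagged---the generic-fiber dimension equality in the real semi-algebraic setting---exactly as you anticipate, via a Nash stratification of $\A$ and of $B$ together with the semi-algebraic Sard and pre-image theorems.
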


\begin{remark}
  Since we are typically interested in proving a result for a general
  semi-algebraic subset $\M \subset \R^N$ without any special knowledge
  of $\M$, when we apply Lemma~\ref{lem.AA} we will bound $\dim(\A(x,y))$ over all pairs $x \not \sim y$ in $\R^N$,
  rather than over pairs in $\M$. \rev{We also note that using generic linear translates is indeed not enough to eliminate the sign ambiguity. The reason is that if $x,-x$ are in $\M$,  then $Ax, -Ax \in A \cdot \M$ for \emph{every} linear transformation $A$.}
\end{remark}

\begin{proof}
The proof of this lemma follows the guidelines of the proofs of Theorem 1.7 and Theorem 3.3 in~\cite{dym2022low}.
The proof is based on studying
 the following semi-algebraic incidence set:
 \begin{equation}
  \B=\B(\M,\A)=\{(x,y,A)\in \M \times \M \times \A\,|\,  x\not \sim y \text{ but } P(x;A)=P(y;A)  \} \subset \M \times \M  \times \A.
\end{equation}
Let \rev{$\pi: \B \to \M \times \M$} be the projection
\begin{equation*}
\pi(x,y,A)=(x,y) .\end{equation*}

According to ~\cite[Lemma 1.8]{dym2022low}, we can bound the dimension of the incidence $\B$ by  
\begin{equation}\label{eq:bound}
\begin{split}
\dim(\B)&\leq\dim(\pi(\B))+\max_{(x,y)\in \M \times \M, x\not \sim y} \dim \left(\pi^{-1}(x,y) \right)\\ &\leq 2\dim(\M)+ \max_{(x,y)\in \M \times \M, x\not \sim y} \dim \A(x,y)\\
&\leq 2M+ \dim(\A)-K.
\end{split}	
\end{equation}

Now, if $K>2M$ then we obtain that $\dim(\B)<\dim(\A)$. If
  $\phi$ is the projection $\phi(x,y,A)=A $ then $\dim(\phi(\B))<\dim(\A)$.
  Thus, a generic $A\in \A$ is not in $\phi(\B)$, and for all such $A$ we know that if $P(x;A)=P(y;A) $ for $x,y\in \M$ then $x\sim y$. This concludes the proof of the second part of the lemma.

For the first part of the lemma, let us assume that $K>M $ so that $\dim(\B)<M+\dim(\A)$. We consider two cases: if $\dim(\phi(\B))<\dim(\A) $ then we have,
as before, that a generic matrix $A \in \A$ is not in $\phi(B)$ and that for all such $A$ we know that if $P(x;A)=P(y;A) $ $x, y \in \M$ then $x \sim y$.

In the second case, we assume that $\dim(\phi(\B))=\dim(\A)$
and we will prove that for a generic matrix $A \in \A$, the fiber
$\phi^{-1}(A) \cap \B$ has dimension strictly less than $M$. Granted
this fact, the proof of the first part proceeds as follows:
Since $\dim (\phi^{-1}(A) \cap \B) < M$, the projection of the fiber onto the $x$ coordinate, which is the set 
\begin{align}
	\N_A&=\{x\in \M| \exists y \in \M \text{ such that } (x,y,A)\in \B\}\\
	&=\{x\in \M| \exists y \in \M \text{ such that }x \not \sim y \text{ but }P(x;A)=P(y;A)\}, \nonumber
\end{align}
also has a dimension strictly less than $M$.
Thus, generic $x$ will not be in $\N_A$, and for such $x$, if there exists some $y\in \M$ such that $P(x;A)=P(y;A) $, then $x \sim y $.

We now prove that the generic fiber of $\phi \colon \B \to \A$ has dimension
less than $M$. 
By \cite[Proposition 2.9.10]{bochnak1998real} the semi-algebraic set
  $\A$
is a disjoint union of Nash submanifolds\footnote{\rev{A semi-algebraic subset of $A \subset \R^N$ is called a Nash submanifold of dimension $d$ if,  for every point $x \in A$, there is an open neighborhood $x \in U \subset \R^N$ and a $C^\infty$ differentiable
function $\varphi \colon U \to \R^d$ such that the restriction of $\varphi$ to $U \cap A$ is a diffeomorphism~\cite[Definition 2.9.9]{bochnak1998real}}.} $\A_1,\ldots,\A_s$. Clearly,
it suffices to prove that  $\dim (\phi^{-1}(A) \cap \B) < M$
for a generic matrix $A$ in any Nash submanifold $\A_\ell$ where $\dim
\A_{\ell} = \dim \A$.
With this observation, we can reduce to the case that $\A$ is a
manifold.

Now we decompose $\B$ as a disjoint union of Nash submanifolds
$\bigcup_{\ell}^L U_\ell$ and 
let $\L\subseteq \{1,\ldots,L\}$
denote the subset of indices $\ell$ for which $\phi(U_\ell)$ attains its maximal value $\dim(\A)$.
For every fixed $\ell\in \L$,
the semi-algebraic version of Sard's
Theorem~\cite[Theorem 9.6.2.]{bochnak1998real}
implies that the generic matrix $A \in \phi(U_\ell)$
is a regular value (i.e., not the image of a critical point) of the restriction of $\phi$ to $U_\ell$.
By the
pre-image theorem \cite[Theorem 9.9]{tu2011manifolds},
every regular value $A$ is
either not in the image of $\phi|_{U_\ell}$, or we have the equality
\begin{equation}
 \dim(U_\ell)=\dim \phi(U_\ell)+\dim (\phi^{-1}(A)\cap U_\ell)=\dim(\A) +
\dim (\phi^{-1}(A) \cap U_\ell).    
\end{equation}
So,
\begin{align*}\dim \left( \phi^{-1}(A) \cap U_\ell\right)=\dim(U_\ell)-\dim(\A)
\leq \dim(\B)-\dim(\A)<M.
\end{align*}
The semi-algebraic set, which is the union of
the images of the critical points of \rev{$\{\phi|_{U_\ell}\}_{\ell \in \L}$} and
$ \cup_{\ell \not \in \L}\phi(U_\ell)$ has dimension
strictly less than $\dim \A$. If $A \in \A$ lies in the complement
of this set, then we have the fiber
\begin{equation}
 \phi^{-1}(A)\cap \B=\cup_{\ell\in \L}\left( \phi^{-1}(A)\cap U_\ell \right),   
\end{equation}
has a dimension strictly smaller than $M$.
\end{proof}

\section{Proof of Theorem \ref{thm.main}} \label{sec:proof_thm_main}
We now proceed to prove Theorem~\ref{thm.main}, which deals with the case
when $\A=\GL(N) $. Since a generic matrix in $\RR^{N\times N}$ is invertible, we can equivalently prove Theorem~\ref{thm.main} where we replace $\GL(N) $ with all of $\RR^{N\times N}$. According to Lemma~\ref{lem.AA}, to do this, it is sufficient to prove the following lemma:

\begin{lemma}\label{lem.RNN}
Let $N$ be a natural number. For every $x,y\in \RR^N$ such that $x \not \sim y$, consider the semi-algebraic set 
\begin{equation*}
\RR^{N\times N}(x,y)=\{A\in \RR^{N \times N} \, | \, P(x;A)= P(y;A) \} .\end{equation*}

Then, $\dim(\RR^{N\times N}(x,y)) \leq N^2-K(N)$, where $K(N)=N/2+1$ (if $N$ is even) or
$K(N)=(N+1)/2$
(if $N$ is odd).
\end{lemma}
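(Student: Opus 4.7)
The plan is to exploit the separable structure of $P(x;A)$ with respect to the rows $w_1,\ldots,w_N$ of $A$. Each coordinate of $P(x;A)$ depends either on a single row or on a pair of consecutive rows, and distinct coordinates involve disjoint sets of rows. Concretely, when $N$ is even there are two single-row coordinates (involving $w_1$ and $w_2$) and $(N-2)/2$ pair-of-rows coordinates; when $N$ is odd there is one single-row coordinate (involving $w_1$) and $(N-1)/2$ pair-of-rows coordinates. Under the identification of $\RR^{N\times N}$ with the product $\RR^N \times \cdots \times \RR^N$ grouped according to this partition, the set $\RR^{N\times N}(x,y)$ becomes a Cartesian product of constraint sets, one for each coordinate of $P$. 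I will therefore bound each factor separately and add the codimensions.

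For a single-row coordinate the constraint is
\begin{equation*}
g_{x,y}(w) := \langle x,w\rangle^2 - \langle y,w\rangle^2 = \langle x+y,w\rangle \, \langle x-y,w\rangle = 0.
\end{equation*}
Since $x \not\sim y$, both $x+y$ and $x-y$ are non-zero vectors in $\RR^N$, so $g_{x,y}$ is a non-zero polynomial in $w$ and its zero set has codimension exactly $1$ in $\RR^N$. For a pair-of-rows coordinate the constraint is
\begin{equation*}
f_{x,y}(w_i,w_{i+1}) := g_{x,y}(w_i) + g_{x,y}(w_{i+1}) = 0.
\end{equation*}
Specializing $w_{i+1} = 0$ recovers $g_{x,y}(w_i)$, which is non-zero by the previous step, so $f_{x,y}$ is also a non-zero polynomial and its zero set has codimension $1$ in $\RR^{2N}$.

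Summing the codimensions over all coordinates then yields $K(N) = 2 + (N-2)/2 = N/2 + 1$ when $N$ is even and $K(N) = 1 + (N-1)/2 = (N+1)/2$ when $N$ is odd, matching the claim. The only place the hypothesis $x \not\sim y$ enters is the factorization argument establishing $g_{x,y}\not\equiv 0$; this is the essential non-degeneracy check, and everything else is product-structure bookkeeping. I therefore expect the proof to be short, with no serious obstacle beyond carefully keeping track of the single-row versus pair-of-rows coordinates in the two parity cases.
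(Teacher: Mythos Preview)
Your proposal is correct and matches the paper's own proof essentially line for line: the paper also decomposes $\RR^{N\times N}(x,y)$ as a Cartesian product of the single-row constraint sets $W^1(x,y)$ and the pair-of-rows constraint sets $W^2(x,y)$, shows each is a proper algebraic subset (using exactly your factorization of $g_{x,y}$ and the specialization $w'=0$ for $W^2$), and sums the resulting codimensions.
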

Indeed, using the first part of Lemma~\ref{lem.AA} we see that if the signals lie in a semi-algebraic set of dimension
$M$, then we require that $K=K(N)>M $. An elementary calculation shows that (for both $N$ even and $N$ odd) this is equivalent to the condition $N \geq 2M$. Similarly,  the second part of Lemma~\ref{lem.AA} requires that $K=K(N)>2M$, which is equivalent (for both $N$ even and $N$ odd)  to the condition $N \geq 4M$. Thus, to prove Theorem~\ref{thm.main} it remains only to prove  Lemma~\ref{lem.RNN}. 

\begin{proof}[Proof of Lemma~\ref{lem.RNN}]
Let us first consider the case when $N$ is even.
Writing $A=(w_1,\ldots,w_{N}) $ as before, we note that $P(x;A)=P(y;A) $ if and only if $w_1$ an $w_2$ are in the set
\begin{equation*}
W^1=W^1(x,y)=\{w\in \RR^N\, | \, p_1(x,w)=p_1(y,w) \} ,\end{equation*}
where $p_1(x,w) = \langle x,w \rangle^2$,
and all pairs $(w_{2k+1},w_{2k+2}) $ for $k=1,\ldots,N/2-1$ are in the set 
\begin{equation*}
W^2=W^2(x,y)=\{(w,w')\in \RR^N \times \RR^N\, | \,
p_2(x,w,w')=p_2(y,w,w') \},\end{equation*}
 where $p_2(x,w,w') = \langle x, w \rangle^2  + \langle x, w' \rangle^2$.
In other words, $\RR^{N\times
  N}(x,y)$ is a Cartesian product of two copies of $W^1$ and $N/2-1 $
copies of $W^2$.  We note that $w$ is in $W^1=W^1(x,y)$ if and only
if $w$ is orthogonal to $x-y $ or $x+y$. For all $x,y$ which are not
equivalent, both $x-y$ and $x+y$ are non-zero vectors and so
$W^1(x,y)$ is a union of two hyperplanes and is of dimension
$N-1$. Similarly, we can show that $W^2(x,y)$ is a proper algebraic subset of $\R^{2N}$
by observing that whenever $x
\not \sim y$, we can take $w$ to be any vector not in
$W^1(x,y)$ and taking
$w'=0$. With these choices, $p_2(x,w,w')\neq p_2(y,w,w') $ so the 
polynomial equation defining $W^2(x,y)$ is not automatically satisfied for all $(w,w') \in \R^N \times \R^N$. 
Therefore, $\dim(W^2(x,y))\leq 2N-1 $. Since the set $W^1(x,y)$ and
$W^2(x,y)$ both have codimension of at least one in $\R^N$ and $\R^{2N}$, respectively,
we conclude that 
\begin{equation*}
\dim \RR^{N\times N}(x,y)=2\dim(W^1(x,y))+(N/2-1)\dim(W^2(x,y))\leq N^2-(N/2+1).\end{equation*}

When $N$ is odd, $\RR^{N\times N}(x,y) $ is a Cartesian product of $(N+1)/2 $  sets: a single copy of $W^1$ and $(N-1)/2 $ copies of $W^2$, so that 
\begin{equation*}
\dim \RR^{N\times N}(x,y)\leq N^2-(N+1)/2.\end{equation*}

\end{proof}

\subsection{Implications for neural generative models: Proof of Corollary \ref{cor.deepnetwork}}\label{sub.NN}
Let $\Phi=\sigma_{\ell-1}\circ A_{\ell-1}\circ \ldots \circ \sigma_1 \circ A_1$ be our given neural network. For any matrix $A_\ell\in \R^{N \times N_{\ell-1}}$ set
 $\Phi_{A_\ell} = A_\ell \circ \Phi$. We wish to show
 that if $N \geq 4M$ (respectively, $N \geq 2M$) then for a
 generic choice of $A_\ell$  any 
 (respectively, a generic) vector in $\im \Phi_{A_\ell}$ can be recovered, up to a sign,  from its power spectrum. In either case (depending on the dimension
 bounds) we say that $\Phi_{A_\ell}$
 {\em satisfies the phase retrieval property}.

 To that end, let
 $\R^{N \times N_{\ell-1}}(\Phi)$ be the subset of $\R^{N \times N_{\ell-1}}$
 such that 
 $\Phi_{A_{\ell}}$ satisfies the phase retrieval property.
 We wish to
 show that the complement of $\R^{N \times N_{\ell-1}}(\Phi)$ in
 $\R^{N \times N_{\ell-1}}$ has dimension smaller
 than $\dim \R^{N \times N_{\ell -1}} =  NN_{\ell -1}$, or equivalently, that
 $\R^{N \times N_{\ell -1}}(\Phi)$
 contains a non-empty Zariski open set. Let $F(N,N_{\ell-1})$
 be the Zariski open subset of $\R^{N \times N_{\ell-1}}$, parametrizing matrices
 of full rank. Clearly, it suffices to prove that 
 \begin{equation*}
\R^{N \times N_{\ell -1}}(\Phi) \cap
 F(N, N_{\ell -1})\end{equation*}
 contains a non-empty Zariski open set; i.e., we need only consider matrices $A_\ell$ of full-rank.
 We have two cases to consider: $N_{\ell-1} \leq N$, and $N_{\ell-1} > N$.

If $N_{\ell-1} \leq N$, then any full-rank matrix $A_\ell \in
F(N, N_{\ell -1})$ can be obtained as $A A_\ell^0$ where $A \in
\GL(N)$ and $A_\ell^0$ is a fixed full-rank matrix. Let 
$\M_0 = \im
\Phi_{A_\ell^0}$.  By Theorem~\ref{thm.main} we know that there is a
Zariski open set $U \subset \GL(N)$ such that (with the appropriate
dimension bounds) the power spectrum separates (generic) signals up to
sign on $A \cdot \M_0$ when $A \in U$. Now, $A \cdot \M_0 = \im \Phi_{AA_\ell^0}$, so 
$AA_\ell^0 \in \R^{N \times N_{\ell -1}}(\Phi)$ for every $A \in U$. The set $\{A A_{\ell-1}^0|
A\in U\}$ forms a Zariski open set in $F(N, N_{\ell -1})$ and thus
in $\R^{N \times N_{\ell -1}}$ so the corollary follows in this
case.

Now, suppose that $N_{\ell-1} >N$. Here we will use~\eqref{eq:conservation2} to show
the complement of $\R^{N \times N_{\ell -1}}(\Phi)$
in $F(N,N_{\ell -1})$ has dimension strictly smaller than $N_{\ell-1} N$.
Consider the action of $\GL(N)$ on $F(N,N_{\ell -1})$ given by
$A \cdot A_\ell = AA_\ell$. The orbit space for this action
is the Grassmann manifold $\Gr(N, N_{\ell-1})$ parametrizing $N$-dimensional
linear subspaces in $\R^{N_{\ell-1}}$.
For any fixed $N$-dimensional subspace, $L \subset \R^{N_{\ell-1}}$ the fiber
of the quotient map $F(N,N_{\ell -1}) \to \Gr(N, N_{\ell-1})$ over
the point corresponding to $L$ is the set of $N \times N_{\ell -1}$ matrices
whose range is $L$. If we fix one such matrix $A_\ell^{0}$, then
every other matrix in the fiber is row equivalent to $A_\ell^{0}$
and thus has the form $A A_\ell^0$ where
$A \in \GL(N)$. By Theorem \ref{thm.main}, $\Phi_{AA_\ell}$ satisfies phase retrieval for a generic choice
of $A \in \GL(N)$.
Hence, the set of $A_\ell$ in the fiber such
that $\Phi_{A_\ell}$ does not satisfy phase retrieval
has a dimension strictly smaller
than $N^2 = \dim \GL(N)$. Now, applying~\eqref{eq:conservation2} we conclude
that the set of $A_\ell \in 
F(N, N_{\ell -1})$ such that
$\Phi_{A_\ell}$ does not satisfy phase retrieval also has dimension
strictly smaller than $N N_{\ell -1}$ and the corollary is proved.

\section{Proof of Theorem~\ref{thm.rotation}} \label{sec:proof_thm_rotation}

Our goal in this section is to prove Theorem~\ref{thm.rotation}, which deals with orthogonal matrices $A \in \SO(N) $. Due to Lemma~\ref{lem.AA}, this is reduced to the task of bounding, for all
$x\not\sim y \in \RR^N$, the dimension of the set 
\begin{equation*}
\SO(x,y)  = \{A \in \SO(N)\,|\, P(x;A) = P(y;A)\}.\end{equation*}
 
\begin{proposition}\label{prop.crux}
  For any $x, y \in \R^N$ with $x \not \sim y$, 
  \begin{equation*}
\dim \SO(x,y) \leq \dim \SO(N) -N/2,\end{equation*}
 if $N$ is even
  and,
\begin{equation*}
 \dim \SO(x,y) \leq \dim \SO(N) - (N-1)/2,\end{equation*}
 if $N$ is odd.
\end{proposition}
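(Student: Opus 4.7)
Set $u = x-y$ and $v = x+y$, which are both nonzero since $x\not\sim y$. Writing $A\in\SO(N)$ via its rows $w_1,\ldots,w_N$ and using the identity $\langle x,w\rangle^2-\langle y,w\rangle^2=\langle u,w\rangle\langle v,w\rangle$, the condition $P(x;A)=P(y;A)$ decomposes into $(N+2)/2$ polynomial equations when $N$ is even (respectively $(N+1)/2$ when $N$ is odd): a ``singleton'' equation $\langle u,w_i\rangle\langle v,w_i\rangle=0$ for $i=1$ (and also $i=2$ when $N$ is even), and a ``pair'' equation
\begin{equation*}
\langle u,w_j\rangle\langle v,w_j\rangle+\langle u,w_{j+1}\rangle\langle v,w_{j+1}\rangle=0
\end{equation*}
for each remaining consecutive pair of rows. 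The crucial observation is that each pair equation depends only on the $2$-plane $L=\Span(w_j,w_{j+1})$, since it is equivalent to $\langle P_L u, P_L v\rangle=0$, where $P_L$ denotes orthogonal projection onto $L$.

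The next step is to establish two local codim-$1$ lemmas. First, for any nonzero $u',v'\in\RR^n$, the set $\{w\in S^{n-1}:\langle u',w\rangle\langle v',w\rangle=0\}$ has codim $1$ in $S^{n-1}$, being the intersection of the sphere with two hyperplanes. Second, for any nonzero $u',v'\in\RR^n$ with $n\geq 3$, the set $\{L\in\Gr(2,n):\langle P_L u',P_L v'\rangle=0\}$ is a proper subvariety of $\Gr(2,n)$, hence has codim at least $1$. I would prove the second statement by exhibiting a $2$-plane at which the relation fails; for instance, taking $L=\Span(w,e)$ with $\langle w,u'\rangle\langle w,v'\rangle\neq 0$ and $e$ orthogonal to $\Span(w,u',v')$ (which can be done when $n\geq 3$).

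With these lemmas in hand, I would parameterize $\SO(N)$ iteratively via its fibrations over successive Stiefel manifolds, selecting the singleton rows first and then the pair $2$-planes in the Grassmannians of successive orthogonal complements. At step $k$, write $u^{(k)}$ and $v^{(k)}$ for the orthogonal projections of $u$ and $v$ onto $\Span(w_1,\ldots,w_k)^\perp$. Provided both are nonzero, each subsequent singleton or pair constraint imposes codim $1$ by the lemmas above. Summing over all constraints yields a total codim of $(N+2)/2$ for $N$ even and $(N+1)/2$ for $N$ odd, which exceeds the required bound by exactly one.

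The main obstacle is handling the degenerate strata: (a) the final pair lives in the degenerate Grassmannian $\Gr(2,2)$, a single point where lemma (ii) does not yield codim $1$, so the last pair constraint may be vacuous and cost one unit of codim; and (b) at steps where $u^{(k)}$ or $v^{(k)}$ vanishes, i.e., when $u$ or $v$ already lies in $\Span(w_1,\ldots,w_k)$, subsequent constraints may trivialize. The one-unit slack in the codim count absorbs (a) cleanly. For (b), one observes that ``$u\in\Span(w_1,\ldots,w_k)$'' is itself a codim-$(N-k)$ condition on the $k$-frame, making the degenerate strata lower-dimensional; a careful stratification of $\SO(x,y)$ according to the smallest $k$ at which a projection vanishes, combined with an induction on $N$ (the $N$ even case reduces cleanly to $\SO(N-1)$ with $N-1$ odd via projection to $w_1\in S^{N-1}$), keeps the overall dimension below the claimed bound. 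The bookkeeping required for (b) is where I expect the bulk of the technical work to lie.
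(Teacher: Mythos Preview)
Your approach is essentially the paper's: iterate through the sphere-bundle tower $\SO(N)=F_N\to\cdots\to F_1$, show each block constraint has codimension one in its fiber provided neither $x+y$ nor $x-y$ has yet landed in the span of the already-chosen rows, and control the degenerate strata (your case~(b)) via a Schubert codimension count. The paper organizes the bookkeeping you anticipate as a simultaneous induction on $\dim F_{S_k}(x,y)$ and on the auxiliary quantity $\dim L_{S_k}(x,y)$, where $L_{S_k}(x,y)=\{\mathbf{w}\in F_{S_k}(x,y): x\pm y\in\Span\mathbf{w}\}$ is exactly your degenerate locus; your Grassmannian reformulation $\langle P_Lu,P_Lv\rangle=0$ is a clean way to see why the final pair contributes nothing, but it is not structurally different from the paper's treatment via Stiefel frames.
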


The proof of proposition~\ref{prop.crux} is quite delicate, and as such, we will devote the next section to its proof. For notational simplicity,
we assume that $N$ is odd throughout the section. 

\subsection{Proof of Proposition \ref{prop.crux}}
We start by noting that an element $A \in \SO(N)$ can be viewed as an $N$-tuple of unit vectors $(w_1,\dots, w_N)$ characterized by the property that $w_k \in \Span(w_1,\dots, w_{k-1})^{\perp}$. Viewed this way, we realize $\SO(N)$ as
the last step in a tower of sphere bundles
\begin{equation*}
\SO(N)=F_N \to F_{N-1} \to \dots \to F_1 = S^{N-1},\end{equation*}
where $F_k$ consists of all unit vectors $(w_1,\dots, w_k)$ such that $w_k \in \Span(w_1,\dots, w_{k-1})^{\perp}$ and $(w_1,\ldots,w_{k-1})\in F_{k-1}$.
It is worth remarking that this process determines $w_N$ up to a sign. However, the requirement $\det(A) = 1$ determines this sign,  so, the fibers
of $F_N\to F_{N-1}$ consist of a single unit vector.

Let \begin{equation*}
F_1(x,y) = \{w \in F_1 = S^{N-1}\,|\,p_1(x,w) = p_1(y,w)\},\end{equation*}
 and
for $k \geq 1$, define \begin{equation*}
F_{2k+1}(x,y) = \{(\mathbf{w}, w_{2k}, w_{2k+1}) \in F_{2k+1}\,|\, \mathbf{w} \in
F_{2k-1}(x,y), \, p_2(x, w_{2k}, w_{2k+1}) = p_2(y, w_{2k}, w_{2k+1})\}.\end{equation*}

\begin{lemma}\label{lem.dimFxy}
 We have the following bound on the dimension of $F_{2k+1}(x,y)$:
\begin{equation*}
\dim F_{2k+1}(x,y)
\leq \dim F_{2k+1}-k-1,\end{equation*}

for $k=0,\dots, \dfrac{N-3}{2}$.
\end{lemma}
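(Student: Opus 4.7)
I would prove Lemma~\ref{lem.dimFxy} by induction on $k$. The base case $k = 0$ is immediate: $F_1(x, y) \subset S^{N-1}$ is cut out by $\langle x - y, w\rangle\langle x + y, w\rangle$, a nontrivial polynomial since $x \not\sim y$ forces $x \pm y \neq 0$, giving $\dim F_1(x, y) \leq N - 2 = \dim F_1 - 1$. For the inductive step, I would apply the incidence bound~\cite[Lemma 1.8]{dym2022low} to the forgetful projection $\pi\colon F_{2k+1}(x, y) \to F_{2k-1}(x, y)$. For $\mathbf{w} = (w_1, \ldots, w_{2k-1}) \in F_{2k-1}(x, y)$, setting $V = \Span(\mathbf{w})^\perp$ and letting $u_V, v_V$ denote the projections of $u = x - y, v = x + y$ to $V$, the fiber $\pi^{-1}(\mathbf{w})$ is the subvariety of the Stiefel manifold $V_2(V)$ of orthonormal pairs in $V$ cut out by
\[
q(a, b) = \langle u_V, a\rangle \langle v_V, a\rangle + \langle u_V, b\rangle \langle v_V, b\rangle = 0.
\]
I then stratify $F_{2k-1}(x, y) = G \sqcup B$, where $G = \{\mathbf{w} : u_V \neq 0 \text{ and } v_V \neq 0\}$ and $B$ is its complement (on which $q$ vanishes identically and the fiber is all of $V_2(V)$).

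On the good stratum $G$, the hypothesis $k \leq (N - 3)/2$ gives $\dim V \geq 4$, and I claim $q$ is nontrivial on $V_2(V)$. Indeed, if $q \equiv 0$, then for any unit $a \in V$ the quadratic form $b \mapsto \langle u_V, b\rangle \langle v_V, b\rangle$ would be constant on the sphere of $a^\perp \cap V$ (of dimension at least $3$), hence represented by a scalar matrix on that subspace; but its symmetric matrix $\tfrac{1}{2}(u_V v_V^\top + v_V u_V^\top)$ has rank at most $2$, so the scalar must vanish. This would force $\langle u_V, a\rangle \langle v_V, a\rangle = 0$ for every $a \in V$, contradicting $u_V, v_V \neq 0$. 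Hence the fiber over $G$ has dimension at most $2N - 4k - 2$, and combining with the inductive hypothesis $\dim G \leq \dim F_{2k-1} - k$ yields the desired contribution $\dim F_{2k-1} + 2N - 5k - 2 = \dim F_{2k+1} - k - 1$.

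The main obstacle is bounding the bad stratum $B \subseteq B^u \cup B^v$, where $B^u = F_{2k-1}(x, y) \cap \{\mathbf{w} : u \in \Span(\mathbf{w})\}$ and $B^v$ is defined analogously. Since the fiber over $B$ has full dimension $2N - 4k - 1$, one needs $\dim B \leq \dim F_{2k-1} - k - 1$, equivalently $B$ must have positive codimension inside $F_{2k-1}(x, y)$. I would parametrize $B^u$ as a bundle over the Grassmannian $\Gr(2k - 2, N - 1)$ of $(2k - 1)$-dimensional subspaces $W \ni u$, with fiber the orthonormal frames of $W$ satisfying the restricted polynomial constraints (which involve only the projections $x_W, y_W$). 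The naive Grassmannian count gives codimension $N - 2k + 1$ in $F_{2k-1}$, which suffices exactly when $k \leq N/3$; in the remaining regime $N/3 < k \leq (N - 3)/2$ one must additionally exploit the $k$ restricted polynomial constraints, splitting by whether $v \perp W$ (the degenerate case, which requires $|x| = |y|$ and $u \perp v$ and thus further cuts the base dimension) or not (where $(x_W, y_W)$ still satisfies $x_W \not\sim y_W$ and a secondary inductive analysis inside the lower-dimensional $W$ provides the extra dimension drop).
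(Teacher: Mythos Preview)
Your stratification of the projection $F_{2k+1}(x,y)\to F_{2k-1}(x,y)$ into a good locus $G$ and a bad locus $B=\{\mathbf w:\ x\pm y\in\Span(\mathbf w)\}$ is exactly the decomposition the paper uses; the paper calls your $B$ the set $L_{2k-1}(x,y)$. The base case and the good--stratum bound are also handled the same way (the paper's Lemma~\ref{lem.firststep} is your argument that $q$ is nontrivial on the Stiefel manifold when $u_V,v_V\neq0$). So the architecture of your proof matches the paper's.

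The divergence is in how the bad stratum is controlled, and this is where your proposal has a genuine gap. The paper does \emph{not} re-bound $B$ from scratch at each step. Instead it runs a \emph{simultaneous} induction, proving at every stage both
\[
\dim F_{2k+1}(x,y)\le\dim F_{2k+1}-(k+1)
\quad\text{and}\quad
\dim L_{2k+1}(x,y)\le\dim F_{2k+1}-k-(N-2k-1).
\]
The second bound is propagated using a Schubert-cycle count in $\Span(\mathbf w)^\perp$ (the locus of $2$-planes containing a fixed nonzero vector has codimension $N-2k-1$), combined with the inductive bounds on both $F_{2k-1}(x,y)$ and $L_{2k-1}(x,y)$. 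Carrying this auxiliary inequality is precisely what makes the bad-stratum contribution $\dim L_{2k-1}(x,y)+(2N-4k-1)$ small enough for \emph{all} $k\le(N-3)/2$, with no case split at $k=N/3$.

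Your alternative---bounding $B$ directly via a Schubert count plus a ``secondary inductive analysis inside $W$''---can in principle be completed, but not as sketched. Two issues: (i) to invoke the lemma inside $W$ (of dimension $2k-1$) you need it already proved for strictly smaller ambient dimension, so the induction must be strong on $N$, which you do not set up; and (ii) once $u\in W$, the $k$-th constraint reduces (given the first $k-1$) to $\langle u,v\rangle=0$, i.e.\ $|x|=|y|$, so it is either vacuous or unsatisfiable---you get only $k-1$ independent constraints in $\OO(W)$, not $k$. With these fixes the total codimension is $(N-2k+1)+(k-1)=N-k\ge k+1$, which does suffice, but the proposal as written does not supply them. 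The paper's simultaneous induction on $\dim F$ and $\dim L$ is the cleaner device that avoids all of this.
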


Lemma~\ref{lem.dimFxy} is a special case of Lemma~\ref{lem.Fxy_multfree}
which will be given in Section~\ref{sec.cruxproof}.
However, given Lemma~\ref{lem.dimFxy}, Proposition~\ref{prop.crux} follows immediately.

\begin{proof}[Proof of Proposition \ref{prop.crux}] Letting
  $k=\frac{N-3}{2}$ in the statement of Lemma \ref{lem.dimFxy} we see that $\dim F_{N-2}(x,y) \leq \dim F_{N-2} -\frac{N-1}{2}$.
Since the fibers of $F_N = \SO(N) \to F_{N-2}$ have dimension $1$,
\begin{equation*}
\dim \SO(x,y) \leq 1+\dim(F_{N-2}(x,y)),\end{equation*}
 and  
\begin{equation*}
\dim \SO(N)= 1+\dim(F_{N-2}).\end{equation*}

We conclude that $\dim \SO(x,y) \leq \dim \SO(N) -\frac{N-1}{2}$.
\end{proof}

\section{Multi-reference alignment with  irreducibles with multiplicity one} \label{sec.MRAmultone}

\subsection{Introduction}

Multi-reference alignment (MRA) is the problem of recovering a signal from its multiple noisy copies, each acted upon by a random group element. 
Let $G$ be a compact group acting on a vector space $V$. 
The MRA model reads
\begin{equation} \label{eq:mra1}
    y= g\cdot x + \varepsilon,
\end{equation}
{where $g\in G$, $x\in V$ and $g \cdot x$ denotes the translation of
$x$ by $g$ under group action. Finally, $\varepsilon\overset{i.i.d.}{\sim}\mathcal{N}(0,\sigma^2 I)$ models the noise.
The goal is to estimate the $G$-orbit of $x$ from $n$ observations $y_1,\ldots,y_n$:
\begin{equation} \label{eq:mra}
    y_i = g_i\cdot x + \varepsilon_i, \quad i=1,\ldots,n.
\end{equation}
where we assume the group elements are drawn from a uniform (Haar) distribution over~$G$. 
 
A wide variety of MRA models have been studied in recent years (see for example~\cite{perry2019sample,bendory2017bispectrum,bendory2022super,bandeira2023estimation}), mainly motivated by the single-particle cryo-electron microscopy (cryo-EM) technology, as discussed in Section~\ref{sec:cryoEM}.
In~\cite{abbe2018estimation,perry2019sample}, an intimate connection between the method of moments and the sample complexity of the MRA problem was uncovered. 
In particular, it was shown that in the high-noise regime as $n,\sigma\to\infty$, the minimal number of observations required to estimate the G-orbit of $x$ uniquely is $n/\sigma^{2d}\to\infty$  
, where $d$ is the lowest order moment of the observations that determines the $G$-orbit\footnote{This is true only when the dimension of the signal is finite~\cite{romanov2021multi}.}.   While it was shown that in many cases the third moment is required to determine a generic signal (e.g.,~\cite{bendory2017bispectrum,bandeira2023estimation,edidin2023orbit}), and thus the sample complexity is $n/\sigma^{6}\to\infty$, recent papers showed that under prior information (specifically, sparsity), the signal can be recovered from the second moment, implying an improved sample complexity of   $n/\sigma^{4}\to\infty$~\cite{bendory2022sample,bendory2023autocorrelation}.

In what follows, we show that the techniques of this paper can be readily extended to study the second moment of MRA models for multiplicity-free representations of compact groups.
Recall that the population second moment is given by 
\begin{equation}\label{eq:pop_second_moment}
 \mathbb{E}[yy^T] = \int_{g\in G} (g\cdot x) (g\cdot x)^Tdg +\sigma^2I.  
\end{equation}
Assuming the noise level $\sigma^2$ is known, the population second moment can be estimated from the observations by averaging over the empirical second moment as long as $n/\sigma^4\to\infty.$  Our goal will be to show that, with the prior that our signals lie in a low-dimensional semi-algebraic set, the population second moment determines a generic signal uniquely.

When $x\in\RR^N$ and $G$ is the group of circular shifts (or the dihedral group), the second moment is the periodic auto-correlation, which is equivalent to the power spectrum (via the Fourier transform). Thus, the phase retrieval problem can be thought of as a special case of the MRA model.

\subsection{Main results of MRA for multiplicity-free representations}
\begin{definition}
  A (real) representation $V$ of a compact 
 group is {\em multiplicity free}
  if $V$ decomposes as a sum of distinct (non-isomorphic) irreducible
  representations
  \begin{equation*}
V = V_1 \oplus \ldots \oplus V_R.\end{equation*}

\end{definition}

For multiplicity-free representations, we obtain the following extensions of
Theorem~\ref{thm.main} and Theorem~\ref{thm.rotation}.

\begin{theorem} \label{thm.main_multfree}
  Let $V$ be an $N$-dimensional real representation of a compact group $G$
  which decomposes as a sum of $R$ distinct irreducible representations
   $V  = V_1 \oplus \ldots \oplus V_R.$
  Let $\mathcal{M}$ be a semi-algebraic subset of $\RR^N$ of dimension $M$. Then, for a generic invertible matrix $A \in \GL(N)$, the following hold:
\begin{enumerate}
\item If $R > M$, 
  then a generic vector $x \in A\cdot \M$ is determined, up to a sign, by its
  second moment.
    \item If $R > 2M$, then every vector $x \in A \cdot \M$ is determined, up to a sign, by its second moment.
\end{enumerate}
\end{theorem}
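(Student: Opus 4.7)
The plan is to mirror the proof of Theorem~\ref{thm.main} with the power spectrum map replaced by an analogue tailored to the isotypic decomposition. Since the reduction in Lemma~\ref{lem.AA} is formal in its use of $P(x;A)$, it suffices to (i) exhibit a map $Q(x;A)$ whose values are determined by the second moment of $Ax$, and (ii) bound the dimension of $\R^{N\times N}(x,y) := \{A \in \R^{N\times N} : Q(x;A) = Q(y;A)\}$ for every pair $x \not\sim y$ in $\R^N$. Applying the resulting variant of Lemma~\ref{lem.AA} with $K$ equal to this codimension will then close the argument.

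To construct $Q$, I would fix a $G$-invariant inner product making the decomposition $V = V_1 \oplus \cdots \oplus V_R$ orthogonal, and write $\pi_i : V \to V_i$ for the orthogonal projections. The operator $T_v := \int_G (g \cdot v)(g \cdot v)^T\, dg$ is $G$-equivariant and symmetric, so by Schur's lemma together with multiplicity-freeness it preserves each $V_i$; taking traces gives
\begin{equation*}
\trace\bigl(T_v|_{V_i}\bigr) \;=\; \int_G \|\pi_i(g \cdot v)\|^2\, dg \;=\; \|\pi_i(v)\|^2,
\end{equation*}
using that $G$ acts by isometries on each $V_i$. Hence the second moment of $v$ always determines the $R$ scalars $\|\pi_1(v)\|^2, \ldots , \|\pi_R(v)\|^2$, regardless of whether each irreducible is of real, complex, or quaternionic type. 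I would therefore set
\begin{equation*}
Q(x;A) \;=\; \bigl(\|\pi_1(Ax)\|^2, \ldots , \|\pi_R(Ax)\|^2\bigr),
\end{equation*}
and verify that the proof of Lemma~\ref{lem.AA} carries over verbatim with $P$ replaced by $Q$, since that proof never uses the specific form of $P$.

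To bound the fiber dimension, I would choose an orthonormal basis of $V$ adapted to the decomposition, so that any $A \in \R^{N \times N}$ splits into blocks $A_1, \ldots , A_R$ with $A_i \in \R^{d_i \times N}$, $d_i = \dim V_i$, and $\pi_i(Ax) = A_i x$. The condition $Q(x;A) = Q(y;A)$ then decouples into $R$ equations, one per block, each of the form
\begin{equation*}
\|A_i x\|^2 - \|A_i y\|^2 \;=\; \sum_{j=1}^{d_i} \langle w_j^{(i)}, x+y \rangle\, \langle w_j^{(i)}, x-y \rangle \;=\; 0,
\end{equation*}
where $w_1^{(i)}, \ldots , w_{d_i}^{(i)}$ are the rows of $A_i$. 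For $x \not\sim y$, both $x+y$ and $x-y$ are non-zero, so this equation is a non-trivial polynomial in the entries of $A_i$ alone, cutting the codimension of each block by at least one; as the blocks form independent coordinates on $\R^{N \times N}$, the total codimension is at least $R$, giving $\dim \R^{N \times N}(x,y) \leq N^2 - R$. Lemma~\ref{lem.AA} with $K = R$ then yields the two conclusions, $R > M$ for generic recovery and $R > 2M$ for full recovery. The step I expect to be most delicate is the Schur-lemma analysis for non-real-type irreducibles, where $\operatorname{End}_G(V_i)$ is larger than $\R$; one must check carefully that the trace invariant $\|\pi_i(v)\|^2$ is still recoverable from the second moment in this generality, so that the full $R$ scalar conditions are indeed available.
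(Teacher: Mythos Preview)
Your proposal is correct and follows essentially the same route as the paper: the map $Q(x;A)$ you define is precisely the paper's $P_{N_1,\ldots,N_R}(x;A)$, your variant of Lemma~\ref{lem.AA} is the paper's Lemma~\ref{lem.AAmultfree}, and your block-by-block codimension count is Lemma~\ref{lem.RNN_multfree}. Your concern about non-real-type irreducibles is handled by exactly the trace argument you give (the paper records this in Appendix~\ref{app.irreps}, noting that $T_v|_{V_i}$ is symmetric and hence a scalar, but your trace computation already extracts $\|\pi_i(v)\|^2$ without needing that refinement).
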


\begin{theorem} \label{thm.rotation_multfree}
    Let $V$ be an $N$-dimensional representation of a compact group $G$,
  which decomposes as a sum of $R$ distinct irreducible representations
   $V  = V_1 \oplus \ldots \oplus V_R$.
  Let $\mathcal{M}$ be a semi-algebraic subset of $\RR^N$ of dimension $M$. Then,  for a generic rotation $A \in \SO(N)$, the following hold: 
\begin{enumerate}
\item If $R > M+1$,
  then a generic vector $x \in A\cdot \M$, $x$ is determined, up to a sign, by its
  second moment
\item If  $R > 2M+1$, then every vector $x \in A \cdot \M$ is determined, up to a sign, by its second moment.
\end{enumerate}
\end{theorem}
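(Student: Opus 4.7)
I would follow the template of the proof of Theorem~\ref{thm.rotation}. The first step is to apply Lemma~\ref{lem.AA} with $\A = \SO(N)$; this reduces the theorem to proving the dimension bound $\dim \SO(x,y) \leq \dim \SO(N) - (R-1)$ for all $x \not \sim y \in \R^N$, in direct analogy to Proposition~\ref{prop.crux}. With this bound in hand, the hypotheses $R > M+1$ (respectively $R > 2M+1$) become $K = R-1 > M$ (respectively $K = R-1 > 2M$), so Lemma~\ref{lem.AA} yields exactly the two conclusions. As a sanity check, in the phase retrieval setting the cyclic representation on $\R^N$ is multiplicity free with $R = \lfloor N/2 \rfloor + 1$ irreducibles, and $R-1$ reproduces the codimensions $(N-1)/2$ or $N/2$ of Proposition~\ref{prop.crux}.

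Next, I would translate the condition $P(x;A) = P(y;A)$ into a system of block equations. Writing the rows of $A$ as $w_1, \ldots, w_N$ and partitioning them into $R$ blocks matched to the decomposition $V = V_1 \oplus \cdots \oplus V_R$ with $d_i = \dim V_i$ and cumulative sums $N_i = d_1 + \cdots + d_i$, multiplicity-freeness implies that the (group-averaged) second moment depends only on the $R$ block energies $\|P_{V_i}(Ax)\|^2$. Thus $\SO(x,y)$ is cut out inside $\SO(N)$ by the $R$ polynomial equations
\begin{equation*}
q_i(A) := \sum_{j = N_{i-1}+1}^{N_i} \left( \langle w_j, x \rangle^2 - \langle w_j, y \rangle^2 \right) = 0, \qquad i = 1, \ldots, R.
\end{equation*}
Because $A$ is orthogonal, $\sum_{i=1}^R q_i(A) = \|x\|^2 - \|y\|^2$; hence either $\SO(x,y) = \emptyset$ (when $\|x\| \neq \|y\|$) or $q_R = 0$ is forced by the first $R-1$ equations. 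In either case, at most $R-1$ independent equations cut out $\SO(x,y)$.

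The technical core is a direct analogue of Lemma~\ref{lem.dimFxy}. I would reuse the sphere-bundle tower $\SO(N) = F_N \to F_{N-1} \to \cdots \to F_1 = S^{N-1}$ and set $F_{N_i}(x,y) \subset F_{N_i}$ to be the locus satisfying $q_1 = \cdots = q_i = 0$. Arguing by induction on $i = 1, \ldots, R-1$, the aim is to prove $\dim F_{N_i}(x,y) \leq \dim F_{N_i} - i$: decompose the map $F_{N_i} \to F_{N_{i-1}}$ as an iterated sphere-bundle obtained from $d_i$ successive steps, and show that $q_i$ does not vanish identically on a generic fiber over $F_{N_{i-1}}(x,y)$, which gives one additional unit drop in dimension. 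Passing to $F_N = \SO(N)$ and using that the remaining fiber of $F_N \to F_{N_{R-1}}$ is equidimensional (and $q_R$ is redundant), one obtains $\dim \SO(x,y) \leq \dim \SO(N) - (R-1)$.

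The main obstacle is the non-triviality step, i.e., showing that $q_i$ is not identically zero on the fiber of $F_{N_i} \to F_{N_{i-1}}(x,y)$ over a generic base point; this is the content of the promised Lemma~\ref{lem.Fxy_multfree}. Suppose for contradiction that $q_i$ did vanish identically on such a fiber. Then for every orthonormal $d_i$-frame $(w_{N_{i-1}+1}, \ldots, w_{N_i})$ in the orthogonal complement $U = \Span(w_1, \ldots, w_{N_{i-1}})^{\perp}$, one would have $\|P_W x\|^2 = \|P_W y\|^2$ for the spanned subspace $W \subseteq U$. By varying $W$ among all $d_i$-planes inside $U$ and polarizing, this forces $P_U x = \pm P_U y$ with a globally consistent sign. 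Combining this with the inductive control of the previous blocks, and exploiting the freedom to perturb $(w_1, \ldots, w_{N_{i-1}})$ within $F_{N_{i-1}}(x,y)$, the sign propagates across blocks and yields $x \sim y$, contradicting the hypothesis. Packaging this propagation into a clean inductive statement, together with the semi-algebraic Sard argument already employed in the proof of Lemma~\ref{lem.AA}, completes the proof.
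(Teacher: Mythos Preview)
Your high-level strategy matches the paper exactly: reduce via Lemma~\ref{lem.AA} (more precisely its second-moment analogue, Lemma~\ref{lem.AAmultfree}) to the bound $\dim\SO(x,y)\le\dim\SO(N)-(R-1)$, then run an induction along the sphere-bundle tower at the block levels $F_{S_1},\ldots,F_{S_{R-1}}$. Your observation that $q_R$ is redundant when $\|x\|=\|y\|$ is correct and explains why the induction stops at $R-1$. Your polarization claim is also correct: if $q_i$ vanishes on the entire fiber over $\w\in F_{S_{i-1}}(x,y)$, then $P_Ux=\pm P_Uy$ for $U=\Span(\w)^\perp$, which is exactly the condition $x\pm y\in\Span(\w)$. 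In the paper's language, the set of such ``bad'' base points is precisely $L_{S_{i-1}}(x,y)$.

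The gap is in your treatment of this bad locus. To push the induction you need $\dim L_{S_{i-1}}(x,y)\le\dim F_{S_{i-1}}-i$ (equivalently, $\dim L_{S_{i-1}}(x,y)<\dim F_{S_{i-1}}(x,y)$ together with the inductive bound on $F$). Your proposed ``sign propagation'' argument does not deliver this. Knowing that $x-y\in\Span(\w)$ for all $\w$ in some top-dimensional component $C\subset F_{S_{i-1}}(x,y)$ only tells you that $x-y$ lies in $\bigcap_{\w\in C}\Span(\w)$; since $C$ is already cut out by the earlier constraints $q_1=\cdots=q_{i-1}=0$, there is no reason this intersection should be $\{0\}$, and perturbing $\w$ within $C$ cannot force $x=y$. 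The contradiction you are aiming for does not materialize.

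The paper closes this gap differently: it runs a \emph{two-variable} induction, simultaneously proving $\dim F_{S_k}(x,y)\le\dim F_{S_k}-k$ and $\dim L_{S_k}(x,y)\le\dim F_{S_k}-(k-1)-(N-S_k)$. The bound on $L$ comes not from sign propagation but from a Schubert-type codimension count: the locus of $S_k$-frames whose span contains a fixed nonzero vector has codimension $N-S_k$ in $F_{S_k}$, and this is combined with the inductive bounds on $F_{S_{k-1}}(x,y)$ and $L_{S_{k-1}}(x,y)$ via a fiberwise case analysis (Lemma~\ref{lem.Fxy_multfree}, with Lemma~\ref{lem.firststep} supplying the one-unit drop over the good stratum). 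Replacing your last paragraph with this coupled induction would make the argument complete.
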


\begin{corollary}[Sample complexity]
Suppose that $n,\sigma\to\infty$.    Under the conditions of Theorem~\ref{thm.main_multfree} and Theorem~\ref{thm.rotation_multfree}, the signal $x$ can be determined uniquely, up to a sign, from only $n /\sigma^4\to\infty$ observations. 
\end{corollary}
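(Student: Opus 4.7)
The proof strategy is to combine consistency of the natural empirical second-moment estimator with the injectivity results just established. Define the bias-corrected empirical second moment
\begin{equation*}
\hat{M}_2^{(n)} := \frac{1}{n}\sum_{i=1}^n y_i y_i^T - \sigma^2 I;
\end{equation*}
by~\eqref{eq:pop_second_moment} this is an unbiased estimator of the population quantity $M_2(x) := \int_G (g\cdot x)(g\cdot x)^T dg$. The plan proceeds in three steps.

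First, I would bound the entrywise variance of $\hat{M}_2^{(n)}$. Expanding $y_i y_i^T = (g_i \cdot x)(g_i \cdot x)^T + (g_i \cdot x)\varepsilon_i^T + \varepsilon_i (g_i \cdot x)^T + \varepsilon_i \varepsilon_i^T$ and using independence of $g_i$ and $\varepsilon_i$ together with the Gaussian fourth-moment formula, a direct calculation shows the entrywise variance of $\hat{M}_2^{(n)}$ is of order $O(\sigma^4/n) + O(\sigma^2 \|x\|^2/n) + O(\|x\|^4/n)$. The first term dominates in the high-noise regime, so the assumption $n/\sigma^4 \to \infty$ is precisely what is required for Chebyshev's inequality to yield entrywise convergence in probability $\hat{M}_2^{(n)} \to M_2(x)$.

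Second, I would apply the injectivity statements of the section. Under the hypotheses of Theorem~\ref{thm.main_multfree} (respectively, Theorem~\ref{thm.rotation_multfree}), the map $\tilde x \mapsto M_2(\tilde x)$ is injective on $(A \cdot \M)/\{\pm 1\}$, so $M_2(x)$ determines $x$ up to a sign. Combining this with the consistency from the previous step, any near-minimizer $\hat x_n \in A \cdot \M$ of $\|M_2(\cdot) - \hat M_2^{(n)}\|$ converges in probability to $\pm x$, and the corollary follows in the limit.

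The main obstacle is the final continuity step: to transfer consistency of $\hat M_2^{(n)}$ to consistency of $\hat x_n$, we need the inverse of the polynomial map $M_2 \colon (A \cdot \M)/\{\pm 1\} \to \R^{N \times N}$ to be continuous at $M_2(x)$. By Tarski--Seidenberg this inverse is a semi-algebraic map on its image; intersecting $\M$ with a ball of sufficiently large radius containing $x$ makes the restricted $M_2$ proper, and properness together with injectivity yields local continuity. This suffices for the qualitative asymptotic claim of the corollary; obtaining quantitative rates would require a stronger stability estimate for the semi-algebraic inverse, which is outside the scope of this result.
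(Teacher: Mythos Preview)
The paper does not give a formal proof of this corollary; it is stated as an immediate consequence of the discussion in Section~\ref{sec.MRAmultone}, which cites \cite{abbe2018estimation,perry2019sample} for the general fact that in the high-noise regime the sample complexity for recovering $x$ from the $d$-th moment scales as $n/\sigma^{2d}$, and then observes (just before~\eqref{eq:pop_second_moment}) that the population second moment can be estimated from the empirical one as long as $n/\sigma^4\to\infty$. The corollary simply plugs in $d=2$, justified by the injectivity of the second moment established in Theorems~\ref{thm.main_multfree} and~\ref{thm.rotation_multfree}.

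Your explicit argument---forming the bias-corrected empirical estimator, bounding its entrywise variance as $O(\sigma^4/n)$ in the high-noise regime, and then invoking injectivity---is exactly the mechanism behind those cited results, so your approach is consistent with what the paper intends, just spelled out rather than delegated to the references. Your final paragraph correctly isolates the one nontrivial technical point (continuity of the inverse of $M_2$ on the semi-algebraic prior), which the paper does not address at all; the compactness-via-bounded-ball argument you sketch is the standard device for this and is adequate for the qualitative asymptotic statement being claimed.
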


Note that the results depend only on the number of irreducible representations $R$, and not on the dimension $N$. In the special case of phase retrieval,
$R=\lfloor{\frac{N}{2}} \rfloor +1$.
In Section~\ref{sec:cryoEM} we discuss the extension to general representations. 

\subsection{Proof of the main results}

The proofs of Theorem~\ref{thm.main_multfree} 
follows the same strategy as the proof
of Theorem~\ref{thm.main} and we will
only sketch it. Likewise, the proof of Theorem~\ref{thm.rotation_multfree}
follows the same strategy 
as the proof of Theorem~\ref{thm.rotation} and reduces to proving
Lemma~\ref{lem.Fxy_multfree}, which generalizes
Lemma~\ref{lem.dimFxy}. We do this in Section~\ref{sec.cruxproof}.

We begin with a general formulation that allows us to go between the formulations
of the second moment given in~\cite{bendory2022sample, edidin2023generic} and the notion
of separating invariants given in~\cite{dym2022low}.

If we view $\R^N$ as an orthogonal representation of a compact group
  $G$ which decomposes as a sum of $R$ distinct irreducible representations of dimensions
  $N_1, \ldots , N_R$, then the {\em second moment}  can be expressed
  (for a suitable choice of coordinates) as
  the function
  $m^2_{N_1, \ldots , N_R}(\cdot ) \colon \R^N \to \R^R$ given by
  \begin{equation} \label{eq.secondmoment}
    (x_1, \ldots, x_N)  \mapsto
    \left( x_1^2 + \ldots + x_{N_1}^2, x_{N_1+1}^2 + \ldots+ x_{N_1 + N_2}^2, \ldots , x_{N_1 +\ldots N_{R-1} +1}^2 + \ldots  + x_{N}^2\right).
\end{equation}
  This was proven in \cite{bendory2022sample} for complex representations
  and formula ~\eqref{eq.secondmoment} follow as long
  as the representations $V_\ell$ remain irreducible when the field of scalars
  is extended to $\CC$. (This occurs, for example, for all irreducible representations of $\SO(3)$.)
  In Appendix \ref{app.irreps} we explain how to get the corresponding
  statement for sums of arbitrary real irreducible representations.

The decomposition in \eqref{eq.secondmoment} implies that if $A$ is any $N \times N$ matrix with rows $(w_1, \ldots , w_N)$, 
  then
\begin{equation} \label{eq.momentsample}
P_{N_1, \ldots ,N_R}(x;A) = m^2_{N_1, \ldots , N_R}(Ax),
\end{equation}  
where 
$P_{N_1, \ldots , N_R}(\cdot;A) \colon \R^N \to \R$ is defined by the formula 
\begin{equation} \label{eq.Pfunction}
x \mapsto \left(p_{N_1}(x,\w_1), p_{N_2}(x,\w_2), \ldots , 
  p_{N_r}(x,\w_R)\right),
  \end{equation}
and $p_{N_i} \colon \R^N \times \R^{N_i \times N} \to \R$ is
the function defined by the formula
\begin{equation*}
p_{N_i}(x,\mathbf{v}) = \sum_{j=1}^{N_i} \langle x, v_j \rangle^2.\end{equation*}

(Here, $\mathbf{v} \in \R^{N_i \times N}$ corresponds to an $N_i$-tuple
$(v_1, \ldots , v_{N_i})$ of vectors in $\R^N$.)
This once again implies that if $A$ is an arbitrary $N \times N$
matrix, the second moment measurements of $Ax$ are separable with respect
to the rows of $A$.

If $\A \subset \R^N \times \R^N$ is a semi-algebraic set, let  
\begin{equation*}
\A(x,y)=\{A\in \A\,|\, P_{N_1, \ldots , N_R}(x;A)=P_{N_1, \ldots , N_R}(y;A) \}. \end{equation*}

The following generalization
of Lemma \ref{lem.AA} holds for the general second moment.
\begin{lemma} \label{lem.AAmultfree}
  Let $\mathcal{M}$ be a semi-algebraic subset of $\RR^N$ of dimension $M$, and let $\A \subseteq \RR^{N\times N}$ be a semi-algebraic set such that
  \begin{equation*}
 \dim(\A(x,y))\leq \dim(\A)-K, \quad  \forall x,y\in \M, \text{ with } x\not \sim y.\end{equation*}

  Then, for a generic invertible matrix $A \in \A$, the following hold:
\begin{enumerate}
\item If $K > M$,
  then a generic vector $x \in A\cdot \M$  is determined up to sign from its
  second moment.
\item If $K > 2M$, then every vector $x \in A \cdot \M$ is determined up to
  sign form its second moment.
\end{enumerate}
\end{lemma}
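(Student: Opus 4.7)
The plan is to transplant the proof of Lemma~\ref{lem.AA} essentially verbatim, with the power-spectrum function $P(x;A)$ replaced throughout by the generalized multiplicity-free moment function $P_{N_1, \ldots, N_R}(x;A)$ defined in~\eqref{eq.Pfunction}. The reason this works is that the argument in Lemma~\ref{lem.AA} only used two features of $P$: first, that it is a polynomial map (so that collision sets are semi-algebraic), and second, that one has a quantitative bound on the dimension of the set $\A(x,y)$ of matrices inducing a collision. Both features carry over since each $p_{N_i}$ is a polynomial in $(x,\mathbf{v})$ and the hypothesis of Lemma~\ref{lem.AAmultfree} supplies the required dimension bound.

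Concretely, I would introduce the semi-algebraic incidence set
\begin{equation*}
\B = \B(\M,\A) = \{(x,y,A) \in \M \times \M \times \A \,|\, x \not\sim y \text{ and } P_{N_1, \ldots, N_R}(x;A) = P_{N_1, \ldots, N_R}(y;A)\},
\end{equation*}
and consider the two projections $\pi(x,y,A) = (x,y)$ and $\phi(x,y,A) = A$. The fiber bound of~\cite[Lemma 1.8]{dym2022low}, together with the hypothesis, gives
\begin{equation*}
\dim \B \leq \dim \pi(\B) + \max_{(x,y)} \dim \A(x,y) \leq 2M + \dim \A - K.
\end{equation*}
For part~(2), the assumption $K > 2M$ yields $\dim \phi(\B) \leq \dim \B < \dim \A$, so a generic $A \in \A$ lies outside $\phi(\B)$ and separates signals in $\M$ up to sign.

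For part~(1), the assumption $K > M$ gives $\dim \B < M + \dim \A$. Either $\dim \phi(\B) < \dim \A$, and we conclude as above, or $\dim \phi(\B) = \dim \A$. In the latter case I would decompose $\A$ into Nash submanifolds using~\cite[Proposition 2.9.10]{bochnak1998real} (reducing to the case where $\A$ itself is a manifold), further decompose $\B$ into Nash pieces $U_1, \ldots, U_L$, and apply the semi-algebraic Sard theorem~\cite[Theorem 9.6.2]{bochnak1998real} together with the pre-image theorem~\cite[Theorem 9.9]{tu2011manifolds} to each stratum $U_\ell$ for which $\phi(U_\ell)$ attains the maximal dimension. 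On each such stratum the generic fiber of $\phi|_{U_\ell}$ has dimension $\dim U_\ell - \dim \A \leq \dim \B - \dim \A < M$. Taking the union over strata and projecting onto $\M$ via $(x,y) \mapsto x$ produces an exceptional set $\N_A \subset \M$ of dimension strictly less than $M$, so a generic $x \in \M$ has no collision partner.

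Since the argument is structurally identical to the proof of Lemma~\ref{lem.AA}, there is no real obstacle beyond bookkeeping: the only points requiring verification are that the map $A \mapsto P_{N_1, \ldots, N_R}(x;A)$ is polynomial in the entries of $A$ for each fixed $x$ (immediate from the definition of $p_{N_i}$), so $\B$ is genuinely semi-algebraic, and that the stratification and Sard arguments go through for a general semi-algebraic $\A$ without requiring further hypotheses. The minor subtlety, as noted in the remark following Lemma~\ref{lem.AA}, is that in applications we will typically obtain the fiber bound over all $x \not\sim y$ in $\R^N$ rather than only in $\M$, but this does not affect the logical flow of the proof.
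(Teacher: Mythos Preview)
Your proposal is correct and mirrors exactly what the paper does: it states Lemma~\ref{lem.AAmultfree} as a direct generalization of Lemma~\ref{lem.AA} without giving a separate proof, since the argument carries over verbatim once $P(x;A)$ is replaced by $P_{N_1,\ldots,N_R}(x;A)$. Your identification of the two structural features actually used (polynomiality of $P$ and the assumed fiber-dimension bound) is precisely the point.
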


By Lemma \ref{lem.AAmultfree} it suffices to uniformly bound
the dimensions of the real algebraic sets
\begin{equation*}
\R^{N \times N}(x,y) = \{A \in \R^{N \times N}| P_{N_1, \ldots , N_R}(x;A) = P_{N_1, \ldots , N_R}(y;A)\},\end{equation*}
 and
\begin{equation*}
\SO(N)(x,y) = \{A \in \SO(N)|  P_{N_1, \ldots , N_R}(x;A) =
  P_{N_1, \ldots , N_R}(y;A)\},\end{equation*}

over all pairs $x \not \sim y$ of vectors in $\R^N$,
where the function $P_{N_1, \ldots , N_R}(\cdot;A)$ was defined
in \eqref{eq.Pfunction}.

Theorem \ref{thm.main_multfree} then follows from the following extension
of Lemma \ref{lem.RNN}.
\begin{lemma}\label{lem.RNN_multfree}
Let $N$ be a natural number. For every $x,y\in \RR^N$ such that $x \not \sim y$, consider the semi-algebraic set 
\begin{equation*}
\RR^{N\times N}(x,y)=\{A\in \RR^{N \times N}| \quad P_{N_1, \ldots , N_R}
(x\;A)= P_{N_1, \ldots , N_R}(y\;A) \} .\end{equation*}

Then, $\dim(\RR^{N\times N}(x,y)) \leq N^2-R$.
\end{lemma}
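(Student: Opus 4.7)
\medskip

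\noindent\textbf{Proof proposal.} The plan is to imitate the proof of Lemma~\ref{lem.RNN}, exploiting the product structure of the defining equations. Write an arbitrary matrix $A \in \R^{N \times N}$ as a concatenation of row-blocks $(\mathbf{w}_1, \ldots, \mathbf{w}_R)$, where $\mathbf{w}_i \in \R^{N_i \times N}$ consists of the $N_i$ rows of $A$ corresponding to the $i$-th irreducible component. By the definition of $P_{N_1, \ldots, N_R}$ in~\eqref{eq.Pfunction}, the equation $P_{N_1, \ldots, N_R}(x;A) = P_{N_1, \ldots, N_R}(y;A)$ decomposes coordinate-wise into the $R$ independent conditions
\begin{equation*}
p_{N_i}(x, \mathbf{w}_i) = p_{N_i}(y, \mathbf{w}_i), \qquad i = 1, \ldots, R,
\end{equation*}
each of which involves only the rows in a single block. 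Consequently, if we set
\begin{equation*}
W^{N_i}(x,y) = \{\mathbf{v} \in \R^{N_i \times N} \,|\, p_{N_i}(x, \mathbf{v}) = p_{N_i}(y, \mathbf{v})\},
\end{equation*}
then $\R^{N \times N}(x,y)$ is the Cartesian product $\prod_{i=1}^R W^{N_i}(x,y)$.

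\medskip

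The next step is to show that each factor $W^{N_i}(x,y)$ is a \emph{proper} algebraic subset of $\R^{N_i \times N}$, hence has codimension at least one. The defining polynomial can be factored as
\begin{equation*}
p_{N_i}(x, \mathbf{v}) - p_{N_i}(y, \mathbf{v}) = \sum_{j=1}^{N_i} \langle x - y, v_j \rangle \langle x + y, v_j \rangle.
\end{equation*}
Since $x \not\sim y$, both $x-y$ and $x+y$ are nonzero, so one can pick $v_1 \in \R^N$ with $\langle x-y, v_1\rangle \langle x+y, v_1\rangle \neq 0$, and then setting $v_2 = \cdots = v_{N_i} = 0$ yields a point of $\R^{N_i \times N}$ not in $W^{N_i}(x,y)$. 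Thus $W^{N_i}(x,y)$ is cut out by a nonzero polynomial, so $\dim W^{N_i}(x,y) \leq N \cdot N_i - 1$.

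\medskip

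Combining these two steps via the additivity of dimension for Cartesian products gives
\begin{equation*}
\dim \R^{N \times N}(x,y) = \sum_{i=1}^R \dim W^{N_i}(x,y) \leq \sum_{i=1}^R (N \cdot N_i - 1) = N^2 - R,
\end{equation*}
since $\sum_i N_i = N$. The main (and essentially only) obstacle is the genericity argument showing that each $W^{N_i}(x,y)$ is a proper subvariety; everything else is just the bookkeeping of the product decomposition. I expect no surprises here because the factorization $\langle x-y,v\rangle \langle x+y,v\rangle$ identifies the obstruction with the sign ambiguity $x \sim y$ exactly, which is precisely the ambiguity the lemma allows.
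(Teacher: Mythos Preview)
Your proof is correct and follows essentially the same approach as the paper: decompose $\R^{N\times N}(x,y)$ as the Cartesian product $\prod_{i=1}^R W^{N_i}(x,y)$, show each factor is a proper algebraic subset of $\R^{N_i\times N}$ (you supply the explicit witness via the factorization $\langle x-y,v\rangle\langle x+y,v\rangle$, whereas the paper leaves this implicit, relying on the analogous step in Lemma~\ref{lem.RNN}), and then sum the codimensions. There is nothing to add.
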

\begin{proof}
  The real algebraic set $\RR^{N \times N}(x,y)$ decomposes as a product
  $W^{N_1}(x,y) \times \ldots W^{N_R}(x,y),$ where
  \begin{equation*}
W^{N_k}(x,y) = \{(w_1, \ldots
  w_{N_k})\in \R^{N \times N_k}| p_{N_r}(x,w_1, \ldots , w_{N_k}) = p_{N_k}(x,w_1, \ldots , w_{N_r})\}.\end{equation*}

  The sets $W^{N_k}(x,y)$ all have dimension at most $NN_k -1$
  so we conclude that
  \begin{equation*}
\dim( \RR^{N \times N}(x,y)) \leq N^2 -R.\end{equation*}

\end{proof}

Theorem \ref{thm.rotation_multfree} follows from the following extension of Proposition \ref{prop.crux}.
 \begin{proposition}\label{prop.crux_multfree}
  For any $x, y \in \R^N$ with $x \not \sim y$, 
  \begin{equation*}
\dim \SO(x,y) \leq \dim \SO(N) -R+1.\end{equation*}
 .
\end{proposition}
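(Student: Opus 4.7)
My plan is to mirror the strategy used for Proposition~\ref{prop.crux}, replacing the pairing of rows in the Stiefel tower by the grouping $N = N_1 + \cdots + N_R$ coming from the irreducible decomposition. Realize $\SO(N)$ at the top of the Stiefel tower $F^{(1)} \to F^{(2)} \to \cdots \to F^{(N)} = \SO(N)$, where $F^{(k)}$ parametrizes orthonormal $k$-frames, and set $n_k = N_1 + \cdots + N_k$. For $0 \leq k \leq R-1$, define
\begin{equation*}
F^{(n_k)}(x,y) = \bigl\{(w_1,\ldots,w_{n_k}) \in F^{(n_k)} : p_{N_j}(x, \mathbf{w}^{(j)}) = p_{N_j}(y, \mathbf{w}^{(j)})\ \text{for}\ 1 \leq j \leq k\bigr\},
\end{equation*}
where $\mathbf{w}^{(j)} = (w_{n_{j-1}+1},\ldots,w_{n_j})$; this is the multiplicity-free analogue of the sets $F_{2k+1}(x,y)$ from Lemma~\ref{lem.dimFxy}. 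I would formulate the analogous dimension estimate
\begin{equation*}
\dim F^{(n_k)}(x,y) \leq \dim F^{(n_k)} - k, \qquad 0 \leq k \leq R-1,
\end{equation*}
as the real content of the proposition. Granting this for $k = R-1$, the proposition follows exactly as in Proposition~\ref{prop.crux}: the fiber of $\SO(N) \to F^{(n_{R-1})}$ is a copy of $\SO(N_R)$, and hence
\begin{equation*}
\dim \SO(x,y) \leq \dim F^{(n_{R-1})}(x,y) + \dim \SO(N_R) \leq \dim \SO(N) - (R-1).
\end{equation*}

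I would prove the dimension estimate by a nested induction: an outer induction on the ambient dimension $N$ and an inner induction on $k$. The base $k = 1$ reduces to exhibiting an orthonormal $N_1$-frame on which $p_{N_1}(x,\cdot) \neq p_{N_1}(y,\cdot)$, which is routine once the cases $y = \lambda x$ with $|\lambda|\neq 1$ and $x,y$ linearly independent are handled separately. For the inner inductive step, consider the projection $\pi_k\colon F^{(n_k)}(x,y) \to F^{(n_{k-1})}(x,y)$: over a base point $\mathbf{w}$, the fiber is cut out inside the Stiefel manifold of orthonormal $N_k$-frames in $U = \Span(\mathbf{w})^\perp$ by the single polynomial equation $p_{N_k}(x|_U,\cdot) = p_{N_k}(y|_U,\cdot)$, and this equation is non-trivial precisely when $x|_U \not\sim y|_U$ in $U$. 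The latter fails only on the \emph{bad locus}
\begin{equation*}
B_k = \{\mathbf{w} : x+y \in \Span(\mathbf{w})\} \cup \{\mathbf{w} : x-y \in \Span(\mathbf{w})\};
\end{equation*}
off $B_k$, the fiber has codimension $1$ in the Stiefel fiber and the inner inductive hypothesis delivers the desired bound.

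The principal obstacle is bounding the contribution of the bad locus, since the direct Grassmannian estimate $\dim B_k \leq \dim F^{(n_{k-1})} - (N - n_{k-1})$ does not achieve codimension $k$ when $N - n_{k-1} < k$. The resolution --- and the reason to carry the outer induction on $N$ --- is that on $B_k^+ = \{x+y \in \Span(\mathbf{w})\}$, each element is a pair $(V,\mathbf{w})$ with $V := \Span(\mathbf{w}) \ni x+y$ of dimension $n_{k-1}$ and $\mathbf{w}$ an orthonormal basis of $V$; because every block $\mathbf{w}^{(j)}$ lies in $V$, the first $k-1$ defining equations of $F^{(n_{k-1})}(x,y)$ become exactly the conditions for $\mathbf{w}$ to lie in the analogous set $F_V^{(n_{k-1})}(\tilde x, \tilde y) \subseteq \OO(V)$ built from the projections $\tilde x = P_V(x)$, $\tilde y = P_V(y)$. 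Since $\dim V = n_{k-1} < N$, the outer induction hypothesis (Proposition~\ref{prop.crux_multfree} applied inside $V$ with its first $k-1$ summands) yields $\dim F_V^{(n_{k-1})}(\tilde x,\tilde y) \leq \dim \OO(V) - (k-2)$ whenever $\tilde x \not\sim \tilde y$ in $V$; the small sublocus where $\tilde x \sim \tilde y$ forces $x - y \in V^\perp$, has strictly smaller dimension, and can be absorbed directly. Combining the Grassmannian contribution with this bound --- and using $N - n_{k-1} \geq 2$, which holds for all $k \leq R-1$ because at least the summands $V_k$ and $V_R$ have yet to be accounted for --- gives $\dim(B_k^+ \cap F^{(n_{k-1})}(x,y)) \leq \dim F^{(n_{k-1})} - k$, and the case $B_k^-$ is symmetric. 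The hardest aspect of this plan is recognizing that, restricted to the bad locus, the remaining $k-1$ equations reinterpret as an instance of the very proposition being proved, in a strictly smaller ambient vector space --- this is precisely what enables the outer induction on $N$ to close.
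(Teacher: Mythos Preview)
Your reduction to the estimate $\dim F^{(n_k)}(x,y)\le\dim F^{(n_k)}-k$ for $k\le R-1$, and your identification of the bad locus $B_k$ where $x\pm y\in\Span(\w)$, coincide exactly with the paper's setup (its $F_{S_k}(x,y)$ and $L_{S_k}(x,y)$). The genuine divergence is in how you control $B_k$. The paper runs a \emph{single} induction on $k$ but carries a second, simultaneous estimate $\dim L_{S_k}(x,y)\le \dim F_{S_k}-(k-1)-(N-S_k)$, obtained from the Schubert condition that an $S_k$-plane contain a prescribed vector; this extra bound is exactly what absorbs the full-fiber contribution over the bad locus at the next step. You instead observe that, restricted to $B_k^\pm$, the first $k-1$ constraints are an instance of the same proposition inside the $n_{k-1}$-dimensional space $V=\Span(\w)$ for the projected pair $(\tilde x,\tilde y)$, and invoke an outer induction on the ambient dimension to get $\dim F_V^{(n_{k-1})}(\tilde x,\tilde y)\le\dim\OO(V)-(k-2)$; combined with the codimension $N-n_{k-1}\ge 2$ of the Grassmannian locus $\{V\ni x\pm y\}$ this delivers the needed bound. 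Both arguments are correct and close in spirit. The paper's version avoids the nested induction and keeps all bookkeeping inside one ambient $\R^N$, at the price of tracking two inequalities; your version is arguably more conceptual, since it makes explicit that the bad locus is a lower-dimensional copy of the original problem, but it requires quantifying the inductive statement over all decompositions $N'=N_1+\cdots+N_{R'}$ with $N'<N$, and it asks you to separately dispose of the sub-locus where $\tilde x\sim\tilde y$ (which, as you note, forces $x\mp y\in V^\perp$ and is harmless once one checks $k\le N-1$).
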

 \begin{proof}
  
Once again, we realize $\SO(N) =F_N \to F_{N-1} \to \ldots F_1$ as a tower
   of sphere bundles.  Let $S_k = \sum_{i=1}^k N_k$ denote the partial sums of $N_i$ and 
consider the intermediate tower
   \begin{equation*}
F_{S_R} = \SO(N)  \to F_{S_{R-1}} \to \ldots F_{S_1} = F_{N_1}.\end{equation*}

For each $F_{S_j}$ in this tower, we define the real algebraic subset
\begin{equation*}
F_{S_j}(x,y) = \{(\w, \tilde{\w}) \in F_{S_j}| \w \in F_{S_{j-1}}(x,y), \
\tilde{\w} = (w_{S_{j-1}+1},\dotsm w_{S_{j}}),\ p_{N_j}(x, \tilde{\w}) = p_{N_j}(y,\tilde{\w}) \}.\end{equation*}

Proposition  \ref{prop.crux_multfree} then follows from the following generalization
of Lemma \ref{lem.dimFxy} whose proof is given in the next section. \end{proof}
 \begin{lemma} \label{lem.Fxy_multfree}
   Given $F_{S_k}(x,y)$ as above, we have $\dim F_{S_k}(x,y) \leq \dim F_{S_k} - k$ for $k=1,\dots, R-1$.
 \end{lemma}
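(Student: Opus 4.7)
The plan is to prove Lemma~\ref{lem.Fxy_multfree} by induction on $k$, leveraging the tower of sphere bundles $F_{S_k}\to F_{S_{k-1}}$ and its restriction to $F_{S_k}(x,y)\to F_{S_{k-1}}(x,y)$. For the base case $k=1$, I would show that the polynomial $p_{N_1}(x,\mathbf{w})-p_{N_1}(y,\mathbf{w})$ does not vanish identically on the Stiefel manifold $F_{N_1}$. Writing $p_{N_1}(x,\mathbf{w})=\|P_W x\|^2$ with $W=\Span(\mathbf{w})$, identical vanishing would force $\trace\bigl(P_W(xx^T-yy^T)\bigr)=0$ for every $N_1$-dimensional subspace $W\subset\R^N$; choosing $W$ spanned by varying $N_1$-element subsets of eigenvectors of $M:=xx^T-yy^T$ forces all eigenvalues of $M$ to coincide, and since $M$ has rank at most two while $N\geq N_1+1$ (which follows from $R\geq 2$), this forces $M=0$, hence $x\sim y$, contradicting the hypothesis. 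So $\dim F_{S_1}(x,y)\leq\dim F_{S_1}-1$.

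For the inductive step, I would consider the projection $\pi\colon F_{S_k}(x,y)\to F_{S_{k-1}}(x,y)$, $\mathbf{w}\mapsto(\mathbf{w}_1,\dots,\mathbf{w}_{k-1})$. Over a point $\mathbf{w}'$ in the base, the fiber is cut out of the Stiefel manifold of $N_k$-frames in $L:=\Span(\mathbf{w}')^\perp$ by the single equation $p_{N_k}(P_L x,\tilde{\mathbf{w}})=p_{N_k}(P_L y,\tilde{\mathbf{w}})$. Applying the base-case linear-algebra argument inside $L$ (valid since $\dim L-N_k=N-S_k\geq 1$) to $P_L x,P_L y$ shows that this fiber has codimension at least one in its ambient Stiefel manifold unless $P_L x=\pm P_L y$, equivalently, unless $x\mp y\in\Span(\mathbf{w}')$. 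Let $Z=Z_+\cup Z_-$, where $Z_\pm:=\{\mathbf{w}'\in F_{S_{k-1}}(x,y):x\pm y\in\Span(\mathbf{w}')\}$. Over $F_{S_{k-1}}(x,y)\setminus Z$ the inductive hypothesis gives contribution at most $\dim F_{S_{k-1}}-(k-1)+\dim(\text{fiber})-1=\dim F_{S_k}-k$, so the lemma reduces to proving $\dim Z\leq\dim F_{S_{k-1}}-k$.

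The main obstacle is bounding $\dim Z$. The key observation is that on the Schubert-type subvariety $\Sigma_-:=\{\mathbf{w}'\in F_{S_{k-1}}:x-y\in\Span(\mathbf{w}')\}$, which has codimension $N-S_{k-1}$ in $F_{S_{k-1}}$, the identity
\begin{equation*}
\sum_{j=1}^{k-1}\bigl(p_{N_j}(x,\mathbf{w}_j)-p_{N_j}(y,\mathbf{w}_j)\bigr)=\|P_{\Span(\mathbf{w}')}x\|^2-\|P_{\Span(\mathbf{w}')}y\|^2=\|x\|^2-\|y\|^2
\end{equation*}
holds, using $P_{\Span(\mathbf{w}')}(x-y)=x-y$. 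So either $\|x\|\neq\|y\|$ and $Z_-=\emptyset$, or $\|x\|=\|y\|$ and exactly one linear relation among the $k-1$ defining equations of $F_{S_{k-1}}(x,y)$ is automatic on $\Sigma_-$, leaving $k-2$ effective constraints. Combined with the codimension $N-S_{k-1}=\sum_{j\geq k}N_j\geq R-k+1\geq 2$ of $\Sigma_-$ (using $k\leq R-1$), a transversality argument yields codimension at least $(N-S_{k-1})+(k-2)\geq k$ for $Z_-\cap F_{S_{k-1}}(x,y)$ in $F_{S_{k-1}}$, and the same for $Z_+$.

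The hard part will be establishing the required transversality rigorously, since a priori the $k-2$ residual equations could still be dependent on $\Sigma_-$. The cleanest route I see is a parallel strengthened induction, simultaneously bounding $\dim\bigl(Z_\pm\cap F_{S_j}(x,y)\bigr)\leq\dim F_{S_j}-j-1$ for $j=1,\dots,R-2$, so that the contribution of $Z$ already carries an extra codimension to start with. The inductive step for this strengthened bound would again use the fiber projection, with the Schubert containment condition treated inductively and the redundancy identity above reapplied block by block to ensure the extra codimension is preserved as $j$ increases.
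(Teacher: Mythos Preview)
Your overall strategy---induction on $k$ via the fibered projection $F_{S_k}(x,y)\to F_{S_{k-1}}(x,y)$, together with a simultaneous induction on the ``bad locus'' $Z$ where $x\pm y\in\Span(\mathbf{w}')$---is exactly the paper's. The paper calls your $Z$ the set $L_{S_{k-1}}(x,y)$ and runs precisely this two-pronged induction. Your eigenvalue/trace argument for the base case is a clean alternative to the paper's explicit-coordinates lemma.

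There is, however, a genuine gap in the auxiliary induction. The hypothesis you propose, $\dim\bigl(Z_\pm\cap F_{S_j}(x,y)\bigr)\leq\dim F_{S_j}-j-1$, is too weak to propagate. In the inductive step for $Z_-^{(j)}$, over a base point $\mathbf{w}'\in Z_-^{(j-1)}$ you have $x-y\in\Span(\mathbf{w}')$, so both the constraint $p_{N_j}(x,\tilde{\mathbf{w}})=p_{N_j}(y,\tilde{\mathbf{w}})$ and the containment $x-y\in\Span(\mathbf{w}',\tilde{\mathbf{w}})$ are automatic; the fiber is the full Stiefel manifold of dimension $\Gamma_j$. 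Your hypothesis then gives only
\[
\dim Z_-^{(j-1)}+\Gamma_j\;\leq\;\dim F_{S_{j-1}}-j+\Gamma_j\;=\;\dim F_{S_j}-j,
\]
one short of the needed $\dim F_{S_j}-j-1$. Your redundancy identity does not help here: on this stratum both the block constraint and the Schubert condition are already vacuous, so there is nothing further to exploit.

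The fix is to carry the \emph{stronger} auxiliary bound that you in fact wrote down heuristically before weakening it: take as inductive hypothesis
\[
\dim L_{S_j}(x,y)\;\leq\;\dim F_{S_j}-(j-1)-(N-S_j),
\]
i.e.\ keep the full Schubert codimension $N-S_j$ rather than collapsing it to $2$. This bound \emph{does} self-propagate: over $L_{S_{j-1}}(x,y)$ the full fiber gives $\dim F_{S_j}-(j-2)-(N-S_{j-1})\leq\dim F_{S_j}-(j-1)-(N-S_j)$ since $N_j\geq 1$; over the complement, the fiber codimension is the Schubert codimension $N-S_j$ (the $N_k$-plane in $\Span(\mathbf{w}')^\perp$ must contain a prescribed nonzero vector), and combining with $\dim F_{S_{j-1}}(x,y)\leq\dim F_{S_{j-1}}-(j-1)$ closes the step. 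This is exactly what the paper does; the redundancy identity $\sum_j(p_{N_j}(x,\cdot)-p_{N_j}(y,\cdot))=\|x\|^2-\|y\|^2$ on $\Sigma_-$ turns out to be unnecessary, since the Schubert codimension alone already dominates the single lost constraint.
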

 \subsection{Proof of Lemma \ref{lem.Fxy_multfree}} \label{sec.cruxproof}
 We begin with a preliminary lemma.
\begin{lemma} \label{lem.firststep} Let $x, y$ be fixed vectors in $\RR^N$ and let $L$ be an $r$-dimensional linear subspace of $\RR^N$. 
	For $1 \leq d < r$, let $M_d$
	be the $\sum_{i=1}^d (r-i)$-dimensional manifold consisting
	of $d$-tuples unit vectors $(w_1,\dots, w_d)$ with
        $\langle w_i, w_j \rangle =0$ for all $i\neq j$. If $x \pm y \notin L^\perp$, then
		$\dim M_d(x,y) \leq \dim M_d - 1$, where \begin{equation*}
M_d(x,y) = \{(w_1,\dots, w_d) \in M\,|\, p_d(x,w_1,\dots, w_d) = p_d(y,w_1, \dots, w_d)\},\end{equation*}

and $p_d(x,w_1, \ldots, w_d) = \sum_{i=1}^d \langle x,w_i \rangle^2$.
\end{lemma}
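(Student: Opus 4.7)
The plan is to show that the defining equation of $M_d(x,y)$ is a non-trivial polynomial constraint on the Stiefel manifold $M_d$, so that the resulting real algebraic subset has strictly smaller dimension. I begin by rewriting the single polynomial condition defining $M_d(x,y)$ via the difference-of-squares identity:
\begin{equation*}
f(w_1,\dots,w_d) := p_d(x,w_1,\dots,w_d) - p_d(y,w_1,\dots,w_d) = \sum_{i=1}^d \langle x-y,\,w_i\rangle\,\langle x+y,\,w_i\rangle.
\end{equation*}
Since each $w_i$ lies in $L$, this sum depends only on the orthogonal projections $u := \Proj_L(x-y)$ and $v := \Proj_L(x+y)$, which are both nonzero by the hypothesis $x\pm y \notin L^\perp$.

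Next, I would exhibit a specific orthonormal $d$-frame in $L$ at which $f$ is nonzero, separating into two cases. If $u,v$ are linearly independent, I pick a unit vector $w_1 \in \Span(u,v)$ for which both $\langle u,w_1\rangle$ and $\langle v,w_1\rangle$ are nonzero (each condition excludes a proper hyperplane of $\Span(u,v)$, so a generic choice works). I then complete to an orthonormal $d$-frame by taking $w_2,\dots,w_d$ inside $L\cap \Span(u,v)^\perp$, which has dimension $r-2 \geq d-1$ because $d<r$. The remaining summands vanish, leaving $f = \langle u,w_1\rangle\langle v,w_1\rangle \neq 0$. If instead $v = \lambda u$ for some $\lambda \neq 0$, I set $w_1 = u/\|u\|$ and pick $w_2,\dots,w_d$ inside $L\cap u^\perp$ (of dimension $r-1\geq d-1$); then $f = \lambda\|u\|^2 \neq 0$.

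To finish, I would invoke that the Stiefel manifold $M_d$ of orthonormal $d$-frames in the $r$-dimensional space $L$ is a smooth, connected, irreducible real algebraic variety when $d<r$ (it is a homogeneous space under $\OO(r)$). Since $f$ is a polynomial that does not vanish identically on $M_d$, the algebraic subset $M_d(x,y)=\{f=0\}$ is a proper closed subvariety, and hence $\dim M_d(x,y)\leq \dim M_d - 1$. I expect the main friction to be the case split in the construction of the witness frame and the verification that enough room remains to extend $w_1$ to a full orthonormal $d$-tuple; both ultimately hinge on the strict inequality $d<r$ and on the fact that $u$ and $v$ are nonzero in $L$, which is exactly what the hypothesis $x\pm y \notin L^\perp$ delivers.
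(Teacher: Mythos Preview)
Your proof is correct and follows essentially the same strategy as the paper: exhibit an orthonormal $d$-frame on which the defining polynomial is nonzero, then use irreducibility of the Stiefel manifold $M_d$ (for $d<r$) to conclude that $M_d(x,y)$ is a proper algebraic subset of strictly smaller dimension. Your coordinate-free witness via $u=\Proj_L(x-y)$, $v=\Proj_L(x+y)$ with the case split on their linear (in)dependence mirrors the paper's adapted-basis construction; one small refinement is that irreducibility is cleaner to justify via the transitive action of the connected group $\SO(r)$ rather than $\OO(r)$, as the paper does.
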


\begin{remark} \label{rem.stiefel}
  The manifold $M_d$ consisting of $d$-tuples of orthonormal
  vectors in an $r$-dimensional vector space $L$ is the set of real points
  of the Steifel manifold $S(d,r) = \{A \in \mathbb{C}^{d \times r} \,|\, AA^T = I_d\}$. The
  complex variety
  $S(d,r)$ is irreducible because it consists of a single orbit under the action of the connected (hence irreducible) algebraic group $\SO(r)$ \cite[Lemma 2.4]{brysiewicz2021degree}.
\end{remark}
\begin{remark}\label{mxyrmk}
  The condition that $r > d$ is necessary for the following reason.  
  If $r= d$, then a $d$-tuple of orthonormal vectors in $L$ is an orthononormal
  basis for $L$, and any two orthonormal bases are related by a rotation.
  On the other hand, if $A$ is a $d \times d$ rotation matrix, then
  $p_d(x,w_1, \ldots, w_d) = p_d(x, Aw_1, \ldots, Aw_d)$.
Hence, if
$\dim L = d$, then $M_d(x,y) =
\emptyset$ or $M_d(x,y) = M_d$.
\end{remark}

\begin{proof}
We begin with the case that $d =1$ because its proof is  simple.
  The equality $p_1(x,w) = p_1(y,w)$ simplifies to yield
  \begin{equation*}
 \langle x + y, w \rangle  \langle x-y, w \rangle = 0.\end{equation*}

  Since $x\pm y \notin L^\perp$, the set of all of $w\in M_1 = S^{r-1}$ satisfying
  $\langle x+y, w \rangle=0$ or $\langle x-y, w \rangle = 0$ is the intersection
  of $S^{r-1}$ with the union
  of two hyperplanes. 
  \medskip

  Now we consider the case $d > 1$.
  Since $x\pm y \not\in L^\perp$, we can not have $x=0 \bmod L^\perp$ and $y=0 \bmod L^\perp$. Without loss of generality, assume $x\neq 0 \bmod L^\perp$. Since $M(x,y) = M(\lambda x, \lambda y)$ for any non-zero scalar $\lambda$, we can further normalize and assume $x$ is a unit vector. We now choose an orthonormal basis
	$e_1,\dots, e_r$ for $L$ such that we set $x=e_1 \bmod L^{\perp}$ and
	choose $e_2$ to satisfy $y = ae_1 + be_2 \bmod L^{\perp}$ for some $a,b
	\in \RR$ (note that if $y\in L^\perp$, the choice of $e_2$ is
	arbitrary and we simply set $a=b=0$). In coordinates given by this
	basis, the manifold $M_d$ is a real algebraic subset of $
 \RR^{dr}$
defined by the following equations:
\begin{equation}
\alpha_{i1}\alpha_{j1} + \ldots + \alpha_{ir} \alpha_{jr}  = \delta_{i,j}
\end{equation} 
	where $w_i = [\alpha_{i1},\dots, \alpha_{ir}]^T$, $1 \leq i \leq j
        \leq d$, and $\delta_{i,j}$ denotes the Kronecker delta function.
The subset $M_d(x,y) \subset M_d$
is defined by an additional constraint
	\begin{equation} \label{eq.key} 
		\sum_{i=1}^{d} \alpha_{i1}^2 - (a\alpha_{i1}+b\alpha_{i2})^2 = 0.
	 \end{equation}
Note that the real algebraic condition $\eqref{eq.key}$ vanishes
identically when $b=0$ and $a=\pm 1$, which is precisely the condition that
either $x+y \in L^{\perp}$ (when $a=-1$) or $x-y \in L^\perp$ (when
$a=1$). Since $M_d$ is an irreducible variety, to obtain the desired dimension bound it suffices to show that
equation \eqref{eq.key} cannot vanish identically on all of $M_d$
unless $x \pm y \in L^\perp$.

Suppose $(a,b)\neq (\pm 1,0)$.  If $b \neq 0$, let $w_i =
e_{i+1}$ for $i=1, \ldots d$. (Note that we require
$d < r$, so $d+1 \leq r$.) Substituting this into the left-hand
side of equation \eqref{eq.key}, we get $b^2 \neq 0$. Thus $(w_1,
\ldots, w_d) \in M_d \setminus M_d(x,y)$. 
If $b =0$ then we know that $a \neq \pm 1$, so $|a| \neq 1$.  Now, take
$w_i = e_i$ for $i = 1, \ldots , d$. Substituting this into the left
hand-side of equation $\eqref{eq.key}$, we get $1-a^2 \neq 0$.  Hence,
$(w_1, \ldots , w_d) \in M_d \setminus M_{d}(x,y)$ in this case as
well. Therefore, $M_d(x,y)$ is a
proper algebraic subset of $M_d$.
\end{proof}
\begin{proof}[Proof of Lemma \ref{lem.Fxy_multfree}]
 To prove Lemma \ref{lem.Fxy_multfree} we will use induction and
  Lemma \ref{lem.firststep} to compute the dimension of the fibers
  of the map $F_{S_{k}}(x,y) \to F_{S_{k-1}}(x,y)$ for $k \geq 1$. 
  However, in order to accurately bound the dimension of $F_{S_k}(x,y)$ we will
  need to introduce an auxiliary
  real algebraic subset
\begin{equation*}
L_{S_k}(x,y) = \{\mathbf{w} \in F_{S_k}(x,y)| x\pm y \in \Span\w\}.\end{equation*}
   
The reason we consider the subspace $L_{S_{k}}(x,y)$ is that the fibers of
the map $F_{S_k}(x,y) \to F_{S_{k-1}}(x,y)$ have larger dimension over points
of $L_{S_{k-1}}(x,y)$ and so we will also need to estimate the dimension
of $L_{S_{k-1}}(x,y)$.
In particular, we will use induction to simultaneously prove two dimension bounds:
\begin{equation} \label{eq.dimFk}
  \dim F_{S_{k}}(x,y) \leq \dim F_{S_{k}}-k,
  \end{equation}
and
\begin{equation} \label{eq.dimLk}
  \dim L_{S_{k}}(x, y) \leq \dim F_{S_{k}} - (k-1)- (N-S_k). \end{equation}

Once we establish the base cases of \eqref{eq.dimFk} and \eqref{eq.dimLk} corresponding
to $k=1$, the induction will proceed as follows:
\begin{enumerate}
\item Assume by induction that the bounds hold for
  $\dim F_{S_{k-1}}(x,y)$ and $\dim L_{S_{k-1}}(x,y)$ for $k\geq 1$, and show
\begin{equation*}
\dim F_{S_k}(x,y) \leq
\max\{\dim F_{S_{k}} - k, \dim F_{S_k} - k - (N-S_{k-1}-2)\}\end{equation*}

and use the induction hypothesis on $\dim L_{S_{k-1}}(x,y)$
to conclude that the dimension bound holds for $F_{S_{k}}(x,y)$. 
\item
Get the dimension bound for $L_{S_k}(x,y)$ using the bounds for
$\dim L_{S_{k-1}}(x,y)$ and $\dim F_{S_{k-1}}(x,y)$.  \end{enumerate}

\medskip

\noindent{\bf Basis step.}
By definition $F_{S_1}(x,y) = \{\mathbf{w} \in F_{S_1}\,|\, p_{N_1}(x,\w) = p_{N_1}(y,\w)\}$. Applying Lemma \ref{lem.firststep} with $L = \R^N$ we conclude
that $\dim F_{S_1}(x,y) \leq \dim F_{S_1} - 1$.
If $\mathbf{w} = (w_1, \ldots , w_{N_1}) \in L_{S_1}(x,y)$ then
by definition, either $x +y$ or $x-y$ is in $\Span \w$, which is an $N_d$
dimensional subspace of $\R^N$. The locus of $N_d$-dimensional linear subspaces of $\R^N$ containing a fixed vector is defined by the Schubert condition 
$(N-N_1, 0, \ldots ,0)$ and therefore has codimension $N-N_1$ in
the Grassmannian $\Gr(N_1, N)$ of $N_1$-dimensional linear subspaces of $\R^N$
\cite[Chapter 5]{griffiths1978principles}.
Since $F_{N_1}$ is the orthonormal frame bundle over the Grassmannian
$\Gr(N_1, N)$, the set of $\w \in F_{N_1}$ that contain either $x+y$ or $x-y$
has dimension $\dim F_{N_1} - (N-N_1)$. Since
$L_{S_1}(x,y)$ is contained in this subset, we conclude that
$\dim L_{S_1}(x,y) \leq \dim F_{S_1}- (N-S_1)$, since $S_1= N_1$.

\medskip

\noindent{\bf Induction step.}
Assume by induction that $\dim F_{S_{k-1}}(x,y) \leq \dim F_{S_{k-1}} -(k-1)$ and that
$\dim L_{S_{k-1}}(x,y) \leq \dim F_{S_{k-1}} -(k-2) - (N-S_{k-1})$
By definition,
\begin{align*}F_{S_k}(x,y) &=\{ (\mathbf{w},\tilde{\w})| \quad \mathbf{\w}\in
  F_{S_{k-1}}(x,y) \text{ and } p_{N_1}(x, \tilde{\w})=
p_{N_1}(y, \tilde{\w}) \\ 
 & \text{ and } \tilde{\w} = (w_{S_{k-1} + 1},\ldots,w_{S_{k}}) \text{ are pairwise orthonormal in} \Span(\w)^\perp\}, 
 \end{align*}
Let $\tau_{S_{k}}: F_{S_k} \to F_{S_{k-1}}$ be given by projection onto the first factor.
The fibers of $\tau_{S_k}$ have dimension
\begin{equation*}
\Gamma_k = N-(S_{k-1}+1) + \dots + N-(S_{k-1}+N_k) = N_k(N-S_k) + \binom{N_k+1}{2}.\end{equation*}

We wish to bound the dimension of the fibers of the restricted
map $\tau_{S_k} \colon F_{S_k}(x,y) \to F_{S_{k-1}}(x,y)$.
Consider the stratification of $F_{S_k}(x,y)$ given by
\begin{equation*}
F_{S_k}^{(1)}(x,y)= \{(\w, \tilde{\w})\in F_{S_k}(x,y)
| \w \in F_{S_{k-1}} \setminus L_{S_{k-1}}(x,y)\},\end{equation*}
  
and
\begin{equation*}
F_{S_k}^{(2)}(x,y)= \{(\w, \tilde{\w})\in F_{S_k}(x,y)
| \w \in L_{S_{k-1}}(x,y)\},\end{equation*}
  
and let us analyze the dimensions of the fibers of $\tau_{S_k}$
over each of the two strata.
\begin{enumerate}

\item If $\mathbf{w} \in F_{S_{k-1}}(x,y)\setminus 
L_{S_{k-1}}(x,y)$ the
 fiber of $F_{S_{k}}(x,y) \to F_{S_{k}}(x,y)$ over $\mathbf{w}$ is the set
 of $N_{k}$-tuples of orthonormal vectors $\tilde{\w} \in \Span\mathbf{w}^\perp$
 that satisfy $p_{N_k}(x,\tilde{\w})
= p_{N_k}(y,\tilde{\w})$.
 Since we know that
 $\mathbf{w}$ lies in the complement of $L_{S_{k-1}}(x,y)$, $x\pm y \notin
 \Span(\mathbf{w})$. Hence, Lemma
 \ref{lem.firststep} applied with $d=N_k$ and $L = \Span(\mathbf{w})^\perp$
 implies that the dimension of
 the fibers over this locus is  $\Gamma_k -1$. Using this and \eqref{eq:conservation2}
we obtain
 \begin{equation*} \begin{split} \dim
 F_{S_{k}}^{(1)}(x,y) &\leq \dim
 F_{S_{k-1}}(x,y)+ \Gamma_k -1 \\ &\leq  \dim F_{S_{k}} - k.
\end{split}
\end{equation*}
\item If $\mathbf{w} \in L_{S_{k-1}}(x,y)$ then by definition of 
we know that either $x+y$ or $x-y$ belongs to
$\Span(\mathbf{w})$. This means that for any choice of $\tilde{\w}$ the equation
$p_{N_k}(x,\tilde{\w}) = p_{N_k}(y,\tilde{\w})$
 is automatically satisfied. Since no additional conditions are imposed
 on $(\mathbf{w}, \tilde{\w})$ the fibers of $F_{S_{k}}(x,y)
 \to F_{S_{k-1}}(x,y)$ have maximal possible dimension which is
 $\Gamma_k$. Using this and \eqref{eq:conservation2} we obtain
\begin{equation*}
 \begin{split} \dim F_{S_k}^{(2)}(x,y) & \leq \dim L_{S_{k-1}}(x,y)
 + \Gamma_k\\ & \leq \dim F_{S_{k-1}} - (k-2) - (N-S_{k-1}) + \Gamma_k\\
&= \dim F_{S_k} - k - (N-S_{k-1}-2).
 \end{split}
 \end{equation*}
 \end{enumerate}
 Compiling the dimension bounds from both strata we conclude when $N\geq S_{k-1}+2$, \begin{equation*}
\dim F_{S_k}(x,y) \leq \max\{\dim F_{S_{k}} - k, \dim F_{S_k} - k - (N-S_{k-1}-2)\} = \dim F_{S_{k}} - k.\end{equation*}

(Note that the requirement $N\geq S_{k-1}+2$ will always be satisfied if $k\leq R-1$.)

\medskip

We now compute $\dim L_{S_{k}}(x,y)$ by analyzing the dimension
of the fibers restricted projection map $\tau_{S_k} \colon L_{S_k}(x,y) \to F_{S_{k-1}}(x,y)$.
  \begin{enumerate}
  \item If $\w\in L_{S_{k-1}}(x,y)$, then certainly any pair
    $(\w,\tilde{\w})$ in $F_{S_{k}}(x,y)$ is in $L_{S_{k}}(x,y)$. In this case we get
	\begin{equation*}
		\begin{split}
			\dim \tau_{S_k}^{-1}(L_{S_{k-1}}(x,y)) &\leq \dim L_{S_{k-1}}(x,y) + \Gamma_k \\
			& \leq \dim F_{S_k} - k - (N-S_{k-1}-2).
		\end{split}
	\end{equation*}
      \item If $w \not\in L_{S_{k-1}}(x,y)$, then let $a = x+y -
        \Proj_{\Span(w)}(x+y)$ and $b=x-y-\Proj_{\Span(\w)}(x-y)$.
        By
        hypothesis, $a,b$ are non-zero vectors in $w^\perp$. An
        $N_k-$tuple $\tilde{\w}$ is an orthonormal basis for a $N_k$-plane in $\Span(\w)^\perp \cong \R^{N-S_{k-1}}$. The locus of such $N_k$-planes is the Grassmann manifold $Gr(N_k,N-S_{k-1})$, which
        has dimension $N_k(N-S_k)$. The subvariety of $N_k$-planes in
        $\Span(\w)^\perp$ containing a fixed vector is given by the Schubert
        condition $(N-S_k,0,\dots, 0)$ and therefore has codimension
        $N-S_k$. Hence, 
	\begin{equation*}
		\begin{split}
\dim \tau_{S_k}^{-1}(F_{S_{k-1}} \setminus L_{S_{k-1}}) &\leq \dim F_{S_{k-1}}(x,y) + \Gamma_k - (N-S_k)\\
				&\leq \dim F_{S_{k-1}} - (k-1) + \Gamma_k - (N-S_k)\\
				&\leq \dim F_{S_k} - (k-1) - (N-S_k).
 		\end{split}
	\end{equation*}
	\end{enumerate}
  Compiling the bounds from both cases, we conclude that
  \begin{equation*}
      \begin{split}
  \dim L_{S_k}(x,y) &\leq \max\{\dim F_{S_k} - k - (N-S_{k-1}-2),\dim F_{S_k} - (k-1) - (N-S_k)\} \\ &= \dim F_{S_k} - (k-1) - (N-S_k).
  \end{split}
  \end{equation*}
as desired.
\end{proof} 

\subsection{Example: Band-limited functions on the sphere $S^2$}
As an example,  we now discuss a special case of the MRA model~\eqref{eq:mra1}, where $V$ is the space of real band-limited functions on the sphere $S^2$ under the action of $\SO(3)$. 
This problem was studied, without considering prior information, in~\cite{fan2021maximum,bandeira2023estimation,liu2021algorithms}.
We give bounds on
the dimension of a semi-algebraic set whose generic translation can distinguish
real band-limited functions from their second moment.

Let $W_L$ the space of $L$-band-limited functions in $L^2(S^2)$. The space
$W_L$ decomposes
as a sum $W_L = V_0 \oplus \ldots \oplus V_L$, where each $V_\ell$ is an irreducible
representation of $\SO(3)$ of dimension $2\ell + 1$ with a basis of
degree $\ell$ spherical harmonic polynomials. Let $N = \sum_{\ell =0}^L (2\ell + 1) = (L+1)^2$ be the dimension of $W_L$. Then we have the following result.

\begin{corollary} \label{cor.so3}
 Let $\mathcal{M}$ be a semi-algebraic subset of $W_L$ of dimension $M$.
  \begin{enumerate}
\item    If $L+1>M$, then for a generic $A \in \GL(N)$,  the generic vector
  $x \in A \cdot \M$, is determined up to a sign by its second moment.
\item  If $L>M$, then for a generic $A \in \SO(N)$ the generic vector 
  $x \in A \cdot \M$ is determined up to a sign by its second moment.
  \end{enumerate}
  Under the conditions above, an $L$ band-limited function on the sphere, acted upon by random elements of $\SO(3)$, can be determined by $n/\sigma^4\to\infty$ observations when $n,\sigma\to\infty.$
\end{corollary}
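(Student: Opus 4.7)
The plan is to verify that the space $W_L$ satisfies the hypotheses of Theorem~\ref{thm.main_multfree} and Theorem~\ref{thm.rotation_multfree}, and then to apply these theorems directly. The representation-theoretic facts we need are standard: under the natural action of $\SO(3)$, the space $W_L$ of band-limited functions on $S^2$ of bandlimit $L$ decomposes as $W_L = V_0 \oplus V_1 \oplus \dots \oplus V_L$, where each $V_\ell$ is the irreducible $\SO(3)$-representation of dimension $2\ell+1$ spanned by the degree-$\ell$ real spherical harmonics. In particular, the $V_\ell$ are pairwise non-isomorphic, so $W_L$ is multiplicity free with exactly $R = L+1$ distinct irreducible summands, and the total dimension is $N = \sum_{\ell=0}^L (2\ell+1) = (L+1)^2$.

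Next, I would note that the complexifications of the $V_\ell$ remain irreducible as $\SO(3)$-representations (they are absolutely irreducible), so the explicit formula~\eqref{eq.secondmoment} for the second moment, derived in~\cite{bendory2022sample} and invoked in the statement of Theorems~\ref{thm.main_multfree} and~\ref{thm.rotation_multfree}, applies verbatim. Hence both theorems apply directly to $\M \subset W_L$ with the parameters $N = (L+1)^2$ and $R = L+1$.

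Now the corollary follows by substitution. For part (1), the hypothesis $L+1 > M$ is exactly $R > M$, so part (1) of Theorem~\ref{thm.main_multfree} applies and yields that for generic $A \in \GL(N)$, a generic vector $x \in A \cdot \M$ is determined up to a sign by its second moment. For part (2), the hypothesis $L > M$ is the condition $R > M+1$, so part (1) of Theorem~\ref{thm.rotation_multfree} gives the corresponding conclusion for generic $A \in \SO(N)$.

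Finally, the sample-complexity claim is immediate from the discussion around~\eqref{eq:pop_second_moment}: the population second moment $\mathbb{E}[yy^T]$ can be estimated from the empirical second moment in the regime $n/\sigma^4 \to \infty$, and once it is recovered, the previous parts of the corollary guarantee that the $\SO(3)$-orbit of $x$ can be read off (up to a sign). I expect no genuine obstacle here; the only point worth flagging carefully is the absolute irreducibility of the $V_\ell$, which is what legitimizes using the real formula~\eqref{eq.secondmoment} rather than a more elaborate complex-to-real conversion as in Appendix~\ref{app.irreps}.
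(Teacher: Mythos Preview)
Your proposal is correct and matches the paper's approach: the paper states the corollary without an explicit proof, as it follows immediately from Theorems~\ref{thm.main_multfree} and~\ref{thm.rotation_multfree} by substituting $R = L+1$, exactly as you do. Your additional remarks on absolute irreducibility of the $V_\ell$ and on the sample-complexity claim are appropriate elaborations of points the paper leaves implicit.
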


\section{Future work} \label{sec:future_work}

\subsection{The sample complexity of cryo-EM with semi-algebraic priors} 
\label{sec:cryoEM}
Single-particle cryo-electron microscopy (cryo-EM) is a leading technology in constituting the structure of molecules~\cite{bendory2020single}. Under some simplified assumptions, the reconstruction problem in cryo-EM can be formulated as  estimating a signal $x$ (the 3-D molecular structure) from $n$ observations of the form 
\begin{equation} \label{eq:cryoEM}
 y_i = T(g_i\cdot x)+\varepsilon_i, \quad \varepsilon_i\sim\N(0,\sigma^2I), \quad i=1,\ldots,n,   
\end{equation}
where $g_i$ are random elements of the group of three-dimensional rotations $\SO(3)$ and 
$T$ is a tomographic projection acting by $Tx(u_1,u_2)=\int_{u_3}x(u_1,u_2,u_3)du_3.$ It is common to assume that $x$ can be represented by a collection of $R$-band-limited functions in $L^2(S^2)$ (as was considered in Corollary~\ref{cor.so3}). While $T$ is not a group action (for example, it is not invertible), it was shown that it does not affect the second moment of the observations, and thus the model~\eqref{eq:cryoEM} can be studied as a case of the MRA model~\eqref{eq:mra1}~\cite{kam1980reconstruction,bendory2022sample}. (The tomographic projection does affect higher-order moments.)

Our goal in future work is to extend Theorem~\ref{thm.main_multfree} and Theorem~\ref{thm.rotation_multfree} to any finite-dimensional representation of the MRA model~\eqref{eq:mra1}. In particular, we wish to derive conditions under which a molecular structure can be recovered from the second moment of the cryo-EM model~\eqref{eq:cryoEM}, where each irreducible representation has a multiplicity greater than one. This in turn may provide the theoretical background to recent results in the field, which are based on deep generative models, see for example~\cite{zhong2021cryodrgn,chen2021deep,kimanius2023data}.

\subsection{Analytical activation function}
In this work, we considered neural networks with semi-algebraic activation functions. We intend to extend these results for activation functions, which are not semi-algebraic but are \emph{analytic}, such as sigmoid, tanh, or swish activations. Such a result can be obtained by combining the techniques we used here with the ideas used in \cite{amir2023neural} to expand similar ideas from the algebraic to the analytic setting.

\section*{Acknowledgment}
The work of the first and third authors were supported by BSF grant 2020159. The work of the first author was also partially supported by NSF-BSF grant 2019752 and ISF grant 1924/21. The work of the second author was supported by ISF grant 272/23. The work of the third author was also supported by NSF-DMS 2205626.

\bibliographystyle{plain}

\appendix

\section{Background on Real Algebraic Geometry} \label{app.algebraic}

We include background material on real algebraic geometry. Much of this discussion appears in \cite{edidin2019geometry}, but we include it here to make the current paper self-contained.
\subsection{Real algebraic sets}
  A real algebraic set is a subset $X= V(f_1, \ldots ,f_r) \subset \R^M$
  defined by the simultaneous vanishing of polynomial equations
  $f_1, \ldots , f_r \in \R[x_1,\ldots, x_M]$.
  Note that any real algebraic set is defined by the single
  polynomial $F = f_1^2 +\ldots + f_r^2$.
Given an algebraic set $X = V(f_1, \ldots , f_m)$, we define the
Zariski topology on $X$ by declaring closed sets to be the
intersections of $X$ with other algebraic subsets of
$\R^M$. An algebraic set is {\em irreducible} if it is not the union
of proper algebraic subsets. An irreducible algebraic set is
called a {\em real algebraic variety}. Every algebraic set has a decomposition
into a finite union of irreducible algebraic subsets \cite[Theorem 2.8.3]{bochnak1998real}.

An algebraic subset of $ X \subset \R^M$ is irreducible if and only if
the ideal $I(X) \subset \R[x_1, \ldots , x_M]$ of polynomials
vanishing on $X$ is prime.  More generally, we declare an arbitrary
subset of $X \subset \R^M$ to be irreducible if its closure in the
Zariski topology is irreducible. This is equivalent to the statement
that $I(X)$ is a prime ideal \cite[Theorem 2.8.3]{bochnak1998real}.

Note that in real algebraic geometry, irreducible algebraic sets need not be
connected in the classical topology. For example, the real variety defined by the equation $y^2 - x^3 +x$ consists of two connected components.

\subsection{Semi-algebraic sets and their maps}
In real algebraic geometry, it is also natural to consider subsets
of $\R^M$ defined by inequalities of polynomials. A {\em semi-algebraic} subset of $\R^M$ is a finite union of subsets of the form:
\[ \{x \in \R^M; P(x) = 0\; \text{ and }\; \left(Q_1(x)>0, \ldots , Q_{\ell}(x) > 0\right)\}.\]
Note that if $f \in R[x_1, \ldots , x_M]$ the set $f(x) \geq 0$
is semi-algebraic since it is the union of the set $f(x) =0$ with the set $f(x) > 0$.

The reason for considering semi-algebraic sets is that the image of an
algebraic set under a real algebraic map need not be real
algebraic. For a simple example, consider the algebraic map $\R \to
\R,
x \mapsto x^2$. This map is algebraic, but its image is the
semi-algebraic set $\{x\geq 0\} \subset \R$. A basic result in real
algebraic geometry states that the image of a semi-algebraic set under
a polynomial map is semi-algebraic, \cite[Proposition 2.2.7]{bochnak1998real}.

A map $f \colon \X \subset \R^N \to \Y \subset \R^M$ of semi-algebraic
sets is {\em semi-algebraic} if the graph $\Gamma_f = \{(x,f(x))\}$ is
a semi-algebraic subset of $\R^N \times \R^M$.  For example, the map
$\R_{\geq 0} \to \R_{\geq 0}, x \mapsto \sqrt{x}$ is semi-algebraic
since the graph $\{(x,\sqrt{x})| x\geq 0\}$ is the semi-algebraic
subset of $\R^2$ defined by the equation $x=y^2$ and inequality $x
\geq 0$.  Again, the image of a semi-algebraic set under a
semi-algebraic map is semi-algebraic.

\subsection{Dimension of semi-algebraic sets}
A result in real algebraic geometry
\cite[Theorem 2.3.6]{bochnak1998real} states that any real semi-algebraic
subset of $\R^n$ admits a semi-algebraic homeomorphism\footnote{A semi-algebraic
  map $f \colon A \to B$ is a semi-algebraic homeomorphism if $f$ is bijective
  and $f^{-1}$ is semi-algebraic.}
to a finite disjoint union of 
hypercubes. Thus, we
can define the dimension of a semi-algebraic set $\X$
to be the maximal dimension of a hypercube
in this decomposition. 
If $\X$ is a semi-algebraic set, then the dimension of $\X$ defined in the previous paragraph can be shown to be equal to
the Krull dimension of the Zariski closure of $\X$ in $\R^M$ \cite[Corollary 2.8.9]{bochnak1998real}. This fact has several consequences:
\begin{enumerate}
\item If $\X$ decomposes into a finite union
$\{\X_\ell\}$ into semi-algebraic subsets, then $\dim \X = \max_\ell \dim \X_\ell$.
\item If $\Y$ is a semi-algebraic
subset of an algebraic set $X$ with $\dim \Y  < \dim \X$, then $\Y$ is a contained
in a proper algebraic subset of $\X$.
\item 
If $\X$ is an irreducible semi-algebraic set and $\Y$ is a proper algebraic subset
(i.e., $Y$ is defined by polynomial equations) then $\dim \Y < \dim \X$.
\end{enumerate}
We also need to understand the properties of the dimensions of semi-algebraic sets under semi-algebraic maps. To that end, we will use the following two properties:
\begin{enumerate}
\item Let $\X$ be a semi-algebraic set and $\pi \colon \X \to \R^p$ a semi-algebraic mapping. Then, $\dim \X \geq \dim \pi(\X)$ \cite[Theorem 2.8.8]{bochnak1998real}
\item  Let $\X \subset \R^n$ be a semi-algebraic set and let
  $\pi \colon \R^n \to \R^m$ be a polynomial map.
  Then according to \cite[Lemma
    1.8]{dym2022low},
  \begin{equation}\label{eq:conservation2}
  \dim \X \leq \dim \pi(\X) +\max_{y\in \pi(\X)}\dim(\pi^{-1}(y))\end{equation} 
\end{enumerate}  
\rev{\subsection{Grassmann manifolds}
The set $\Gr(M,V)$ of $M$-dimensional linear subspaces of an $N$-dimensional
vector space $V$ has the natural structure as an algebraic  manifold, called
the {\em Grassmannn manifold of $M$ planes in $V$}. The Grassmann manifold $\Gr(M,V)$
has dimension $M(N-M)$ and there are a number of ways to see the manifold
structure and compute the dimension.

Given an ordered basis $v_1, \ldots , v_M$
for an $M$-dimensional linear subspace
$\Lambda$, we can associate a full rank $M \times N$ matrix
$A_\Lambda = [v_1 \ldots v_M]$. Conversely, given a full rank $M \times N$
matrix~$A$, the columns of $A$ determine an ordered basis for an $M$-dimensional linear
subspace of $V$. Two matrices $A,A'$ correspond to the same linear subspace
if and only if there is an invertible $M \times M$ matrix $P$ such
that $A = PA'$. Hence, the Grassmann manifold can be identified as
the quotient $F(M,N)/\GL(M)$, where $F(M,N)$ is the set of full rank
$M \times N$ matrices. Since 
$F(M,N)$ has dimension $MN$ and $\GL(M)$
is a group of dimension $M^2$ which acts with trivial stabilizers, the quotient
has dimension $MN - M^2 = M(N-M)$.  The map $F(M,N) \to \Gr(M,V)$ 
maps a full-rank $M \times N$ matrix to the linear subspace of $\R^N$ spanned by the rows of the matrix. The fiber of the quotient map $F(M,N) \to \Gr(M,N)$ over a point of $\Gr(M,N)$ parametrizing
an $M$-dimensional linear subspace $\Lambda$ is the set of all possible bases for $\Lambda$ as a subspace of $\R^N$ or equivalently the set of $M \times N$ matrices whose range is $\Lambda$.
For more on Grassmann manifolds, see \cite[Chapter 1, Section 5]{griffiths1978principles}.
\subsection{The structure of the groups $\OO(N)$ and $\SO(N)$}
The group $\OO(N)$ of orthogonal matrices is a real algebraic subset of $\R^{N \times N}$ defined by the condition $A^T A = \Id$.
An element of $\OO(N)$ is an $N$-tuple of mutually orthogonal unit vectors $(w_1, \ldots , w_N)$ with $w_i \in \R^N$. If we denote by
$F_k$ to be the set of $k$-tuples $(w_1, \ldots , w_k)$ of mutually orthogonal unit vectors then $\OO(N) = F_N$
and we have a sequence of projection maps $F_N \stackrel{\pi_N} \to F_{N-1} \to  \ldots \stackrel{\pi_2} \to F_{1}$ where the map $\pi_{k+1} \colon F_{k+1} \to F_k$ is defined by dropping
the last vector; i.e., 
$(w_1, \ldots , w_{k+1}) \mapsto (w_1, \ldots , w_k)$. The space $F_1$ is the unit sphere $S^{N-1}$ and if
$(w_1, \ldots , w_k) \in F_k$,  then the fiber $\pi_k^{-1}(w_1, \ldots w_k)$ consists of all $k+1$ tuples $(w_1, \ldots , w_k, w)$
where $w$ is a unit vector orthogonal to $(w_1, \ldots , w_k)$. Thus, the fiber can be identified with the unit sphere in
the $(N-k)$-dimensional subspace $(w_1, \ldots, w_k)^\perp$. For this reason, we say that the sequence of maps $F_N \ldots F_{N-1} \to \ldots \to F_1$ is a {\em tower of sphere bundles} since the fibers of each map in the sequence are spheres.

The group $\SO(N)$ of orthogonal matrices of determinant one can likewise be realized as a tower of sphere bundles
$\SO(N) = F_N \to F_{N_1} \to \ldots \to F_1$. The only difference is that the fibers $F_N \to F_{N-1}$ consist of a single
unit vector. The reason is that if $(w_1, \ldots , w_{N-1})$ is an $(N-1)-$tuple of mutually orthogonal unit vectors, there is a unique
unit vector $w \in (w_1, \ldots , w_{N-1})^\perp$ such that the matrix $(w_1, \ldots , w_{N-1}, w)$ has determinant one.
}

\section{Second moments for multiplicity free representations }\label{app.irreps}
A real representation $V$ of a compact 
 group is  {\em multiplicity free}
  if $V$ decomposes as a sum of distinct (non-isomorphic) irreducible
real  representations
\begin{equation*}
V = V_1 \oplus \ldots \oplus V_R.\end{equation*}

If $V$ is multiplicity free then we can express
$f \in V$ as $f = (f[1], \ldots , f[R])$, where $f[\ell] \in V_\ell$.
The second moment of a signal $x\in  V$ is defined by the formula:
\begin{equation}\label{eq:second_moment}
 \mathbb{E}[ff^*] = \int_{g\in G} (g\cdot f) (g\cdot f)^*dg .  
\end{equation}
\begin{proposition}
  If
  $V = V_1 \oplus \ldots \oplus V_R$ is a multiplicity free
  representation of a compact group $G$
  then the {\em second moment} matrix can be expressed
as a composition of an injective linear function $L:\RR^R\to \RR^{N\times N} $ with
  the function
  $m^2_{N_1, \ldots , N_r}(\cdot ) \colon V \to \RR^r$ given by
  \begin{equation*}
    f = (f[1], \ldots, f[R])  \mapsto
    \left( |f[1]|^2, \ldots, |f[r]|^2\right),
\end{equation*}
  where $f[\ell]$ is the component of $f$ in the $\ell$-th irreducible
  summand $V_\ell$.
\end{proposition}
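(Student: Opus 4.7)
The plan is to analyze the matrix $M(f) := \int_G (g \cdot f)(g \cdot f)^T \, dg$ block by block with respect to the decomposition $V = V_1 \oplus \ldots \oplus V_R$. Writing $f = (f[1], \ldots, f[R])$, the $(i,j)$-block of $M(f)$ is
\begin{equation*}
 M_{ij}(f) = \int_G (g \cdot f[i])(g \cdot f[j])^T \, dg,
\end{equation*}
which, after identifying $V_\ell^*$ with $V_\ell$ via the invariant inner product, defines a $G$-equivariant linear map $V_j \to V_i$. When $i \neq j$, the multiplicity-free hypothesis together with Schur's lemma over $\RR$ forces $M_{ij}(f) = 0$, so $M(f)$ is block-diagonal with $R$ blocks.

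Next, for each $\ell$, the diagonal block $M_{\ell\ell}(f)$ is a self-adjoint $G$-equivariant endomorphism of $V_\ell$, equivalently a $G$-invariant symmetric bilinear form on $V_\ell$. I would argue that the space of such endomorphisms is one-dimensional for any real irreducible $V_\ell$, regardless of whether $V_\ell$ is of real, complex, or quaternionic type. Concretely, $\mathrm{End}_G(V_\ell)$ is isomorphic as a real algebra to $\RR$, $\CC$, or $\mathbb{H}$, and in each case the self-adjoint elements (with respect to the invariant inner product) form the one-dimensional subspace of real scalars --- the complex structure of $\CC$ and the three imaginary units of $\mathbb{H}$ are all antisymmetric. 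Consequently $M_{\ell\ell}(f) = \lambda_\ell(f)\, \Id_{V_\ell}$ for some scalar $\lambda_\ell(f)$, and taking the trace while using $G$-invariance of the inner product yields $N_\ell \lambda_\ell(f) = \int_G |g \cdot f[\ell]|^2 \, dg = |f[\ell]|^2$.

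Assembling the blocks back into a single $N \times N$ matrix, we obtain
\begin{equation*}
M(f) = \sum_{\ell=1}^R \frac{|f[\ell]|^2}{N_\ell} E_\ell,
\end{equation*}
where $E_\ell$ denotes the diagonal matrix that is the identity on the $V_\ell$ block and zero elsewhere. This exhibits $M(f)$ as the composition $L \circ m^2_{N_1,\ldots,N_R}$ applied to $f$, for the linear map $L(c_1, \ldots, c_R) = \sum_{\ell=1}^R (c_\ell / N_\ell) E_\ell$. Injectivity of $L$ is immediate: the matrices $E_\ell$ are non-zero and supported on pairwise disjoint diagonal blocks, so the sum vanishes only when every $c_\ell$ does.

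The main obstacle will be establishing that $\mathrm{End}_G(V_\ell)$ contains only real scalars as its self-adjoint elements, uniformly across the three types of real irreducibles. The real type is immediate from Schur's lemma and matches the case already handled in \cite{bendory2022sample}, where the representations remain irreducible upon complexification; the complex and quaternionic types require identifying the anti-involution on $\mathrm{End}_G(V_\ell)$ induced by the invariant inner product and verifying that its fixed subspace is one-dimensional. Once this symmetric-antisymmetric splitting is in place, the rest of the argument is direct bookkeeping with traces.
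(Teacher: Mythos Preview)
Your proposal is correct and follows the same overall architecture as the paper: block-decompose $M(f)$, kill the off-diagonal blocks via Schur's lemma, show each diagonal block $M_{\ell\ell}(f)$ is a scalar multiple of the identity, and read off the scalar via the trace. The one substantive difference is in how you establish that $M_{\ell\ell}(f)$ is scalar. You invoke the classification of real irreducibles into real, complex, and quaternionic types and then check case-by-case that the self-adjoint part of $\End_G(V_\ell)\cong \RR,\CC,\mathbb{H}$ is one-dimensional. The paper instead uses a direct eigenspace argument: $M_{\ell\ell}(f)$ is symmetric, hence diagonalizable with real eigenvalues; each eigenspace is $G$-invariant because $M_{\ell\ell}(f)$ is $G$-equivariant; irreducibility of $V_\ell$ forces a single eigenspace, so $M_{\ell\ell}(f)=\lambda_\ell\Id$. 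This is shorter and avoids appealing to the classification of real division algebras; your route gives a more structural picture of $\End_G(V_\ell)$ but at the cost of additional background. Your explicit description of $L$ and the verification of its injectivity are a nice addition that the paper leaves implicit.
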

As mentioned in the main text, this proposition was proven in \cite{bendory2022sample} for complex representations.
\begin{proof}
  By definition, the second moment is a map $V \to \Hom_G(V,V)$,
  where $\Hom_G(V,V)$ denotes the vector space of $G$-invariant linear transformation $V \to V$. Since $V = \oplus V_\ell$, $\Hom_G(V,V)$ decomposes
  as a sum $\oplus_{k,\ell} \Hom_G(V_k, V_\ell)$. If $k \neq \ell$ then
  $\Hom_G(V_k, V_\ell) = 0$ because any map of finite dimensional irreducible
  representations of a group is either 0 or an isomorphism. (This is a special case of Schur's lemma and holds over both $\R$ and $\CC$.)
  It follows that, with respect to a suitable basis on $V$, the matrix representing
  $m^2_{N_1, \ldots , N_R}(x)$ is a block diagonal matrix, where the $\ell$-th block has size
  $N_\ell \times N_\ell$ where $N_\ell = \dim V_\ell$ and is given
  by $M_\ell =\int_G (g\cdot f[\ell])(g\cdot f[\ell])^*$.

  We claim that matrix $M_\ell$ is necessarily a scalar multiple of the diagonal.
  The reason is that $M_\ell$ is a symmetric matrix, so it necessarily has real
  eigenvalues. Since $M_\ell$ defines a $G$-invariant linear transform, 
  its eigenspaces are $G$-invariant subspaces. Hence, $M_\ell$ can have only
  a single eigenvalue, i.e., it is a scalar multiple of identity.
(Note that if the representation $V_\ell$ remains irreducible after extending the scalars to $\CC$ this follows directly from Schur's lemma.)
  Moreover, if $M_\ell = \lambda_\ell \Id_{N_\ell}$ we can determine the
  scalar $\lambda_\ell$ as in the proof of \cite[Proposition 2.1]{bendory2022sample}. Namely,
  \begin{equation*}
\trace\;M_\ell = N_\ell \lambda_\ell  \int_G \trace \left((g \cdot f[\ell]) (g \cdot f[\ell])^*\right) = |f[\ell]|^2,\end{equation*}

  where the latter equality holds because $g$ acts by orthogonal transformations.
\end{proof}

\end{document}